\newtheorem{lemma}[theorem]{Lemma}
\newcommand{\hll}{\color{red}}
\address{%
$^{1}$ \quad State Key Laboratory of Millimeter Waves, Southeast Univerity, Nanjing 210096, China; chenpengseu@seu.edu.cn\\
$^{2}$ \quad School of Electronic and Information Engineering, Shanghai Dianji University, Shanghai 201306, China;\\
$^{3}$ \quad School of Information Science and Engineering, Southeast University, China; \{zhangxuan, liulinxi\}@seu.edu.cn}
\abstract{The imperfect array degrades the direction finding performance. In this paper, we investigate the direction finding problem in uniform linear array (ULA) system with unknown mutual coupling effect between antennas. By exploiting the target sparsity in the spatial domain, sparse Bayesian learning (SBL)-based model is proposed and converts the direction finding problem into a sparse reconstruction problem. In the sparse-based model, the \emph{off-grid} errors are introduced by discretizing the direction area into grids. Therefore, an off-grid SBL model with mutual coupling vector is proposed to overcome both the mutual coupling and the off-grid effect. With the distribution assumptions of unknown parameters including the noise variance, the off-grid vector, the received signals and the mutual coupling vector, a novel direction finding method based on SBL with unknown mutual coupling effect named DFSMC is proposed, where an expectation-maximum (EM)-based step is adopted by deriving the estimation expressions for all the unknown parameters theoretically. Simulation results show that the proposed DFSMC method can outperform state-of-the-art direction finding methods significantly in the array system with unknown mutual coupling effect.}
\begin{document}


\section{Introduction} \label{sec1}

{\hll In the direction finding problem, the traditional discrete Fourier transform (DFT)-based method can only find one signal in one beam-width,  so the resolution of such a method is too low to estimate multiple signals.} Therefore, the super-resolution methods have been proposed including multiple signal classification (MUSIC) method~\cite{ralph1986,schmidt1981}, Root-MUSIC method~\cite{Zoltowski1993}, and the estimating signal parameters via rotational invariance techniques (ESPRIT) method~\cite{roy1989}. {\hll Additionally, the subspace methods have also been improved to estimate the correlation signals,} such as the spatial smoothing MUSIC method~\cite{Pham2017}. However, the subspace methods only distinguish the noise and signal subspaces and have not exploited {\hll additional} characteristics of the received signals.

The compressed sensing (CS)-based methods {\hll have been} proposed to estimate the directions by exploiting the signal sparsity in the spatial domain~\cite{pengele,Matteo2013,7467561,yao2011,Matteo2016,Chen:2017ena,yang2017SP,Qing2016,Yang20166}. Notably, the sparse Bayesian learning (SBL) and the relevance vector machine (RVM) proposed in~\cite{Tipping2001} can achieve better estimation performance in the CS-based direction finding methods, where the directions are estimated by reconstructing the sparse signals in the spatial domain with the corresponding distribution assumptions of unknown parameters. Consequently, the SBL-based CS method, named CS-SBL, is developed in~\cite{shihao2008} to reconstruct the sparse signals. However, in the CS-SBL method, the discrete grids are adopted to formulate the CS-based system model, so the estimation performance is limited by the grid size. To further improve the estimation performance, dense grids can be adopted, but it will improve the computational complexity in the sparse reconstruction algorithm.

Additionally, the dense grids improve the correlation between the grids and decrease the performance of sparse reconstruction. {\hll The \emph{off-grid} CS-based methods were proposed~\cite{Xiaohuan2016,pengTSP} to overcome the grid problem in the CS-based model,} such as the off-grid sparse Bayesian inference (OGSBI) method proposed in~\cite{yang2013}. In the off-grid CS-based system model, the ground-truth directions are approximated by the Taylor expansion, so {\hll the performance of direction estimation}  can be improved in the off-grid methods when the same grids are adopted. Moreover, by solving the roots of a specific polynomial in an off-grid model, the Root-SBL method~\cite{jisheng2017} was also proposed to decrease the computational complexity of the SBL-based method. The grid evolution method was proposed in~\cite{qianli2018}  to refine the grids for the SBL-based method, and a dictionary learning algorithm is proposed in~\cite{Hojatollah2016}.

In a practical direction finding problem, the imperfection of the antenna array will decrease the estimation performance, so the mutual coupling effect between antennas cannot be ignored~\cite{Clerckx2007,Zhidong2012}. {\hll The direction finding methods were proposed in~\cite{Paolo2017,Jianyan2017,matthew2017} to decrease the mutual coupling effect.} However, in the existing sparse-based methods, the unknown mutual coupling effect is not considered, especially in a scenario with off-grid effect. 

In this paper, a symmetric Toeplitz matrix~\cite{Thomas2018,ce2017,liao2012} is used to describe the mutual coupling effect, and {\hll a} novel direction estimation method is proposed. With both the off-grid and the mutual coupling effect, {\hll the direction finding problem is investigated.} A novel system model is formulated to describe both the off-grid and the mutual coupling effect. Then, by exploiting the signal sparsity in the spatial domain, a novel direction finding method based on SBL with unknown mutual coupling effect, named DFSMC, is proposed. Additionally, with the distribution assumptions, we theoretically derive the estimation of all unknown parameters using the expectation-maximum (EM)-based  method in DFSMC, where the unknown parameters include the mutual coupling vector, the noise variance, the signals, the off-grid vector, et al. Finally, the proposed DFSMC method is compared with the  state-of-art methods in the direction finding performance. To summarize, we make the contributions as follows:
\begin{itemize}
    \item \textbf{The SBL-based system model with mutual coupling effect:} With considering both the off-grid and the unknown mutual coupling problems, a novel system model is formulated and transforms the direction finding problems into a sparse reconstruction problem.
    \item \textbf{The DFSM method for direction finding estimation:} With the distribution assumptions of all unknown parameters, a novel SBL-based direction finding method with unknown mutual coupling effect, named DFSMC, is proposed. DFSMC method estimates the directions via updating all the unknown parameters alternatively and achieves better estimation performance than the state-of-art methods.
    \item \textbf{The theoretical estimation expressions for all unknown parameters:} In the proposed DFSMC method, the EM method is adopted to estimate all the unknown parameters including the noise variance, the received signals, the mutual coupling vector, and the off-grid vectors, et al. With the distribution assumptions, we theoretically derive  the expressions for all the unknown parameters.  
\end{itemize}

The remainder of this paper is organized as follows. The system model for direction finding with unknown mutual coupling effect is formulated in Section~\ref{sec2}. The direction finding method based on SBL is given in Section~\ref{sec3}.  The simulation results are given in Section~\ref{sec4}. Finally, Section~\ref{sec5} concludes the paper.

\textit{Notations:}  $\boldsymbol{I}_N$ denotes an $N\times N$ identity matrix. 
$\mathcal{E}\left\{\cdot\right\}$ denotes the expectation operation.
$\mathcal{CN}\left(\boldsymbol{a},\boldsymbol{B}\right)$ denotes the complex Gaussian distribution with the mean being $\boldsymbol{a}$ and the covariance matrix being $\boldsymbol{B}$.  $\|\cdot\|_2$, $\otimes$,  $\operatorname{Tr}\left\{\cdot\right\}$,   $(\cdot)^*$, $(\cdot)^\text{T}$ and $(\cdot)^\text{H}$ denote the $\ell_2$ norm, the Kronecker product,  the trace of a matrix, the conjugate, the matrix transpose and the  Hermitian transpose, respectively. $\mathcal{R}\{a\}$ denotes the real part of complex value $a$. Additionally, for a vector $\boldsymbol{a}$, $\left[\boldsymbol{a}\right]_n$ denotes the $n$-th entry of $\boldsymbol{a}$, and $\operatorname{diag}\{\boldsymbol{a}\}$ denotes a diagonal matrix with the diagonal entries from $\boldsymbol{a}$. For a matrix $\boldsymbol{A}$, $\boldsymbol{A}_{:,n}$ denotes the $n$-th column of $\boldsymbol{A}$, and $\operatorname{diag}\{\boldsymbol{A}\}$ denotes a vector with the entries from the diagonal entries of $\boldsymbol{A}$.

\section{ULA System for Direction Finding}\label{sec2}

\begin{figure}
	\centering
	\includegraphics[width=3.5in]{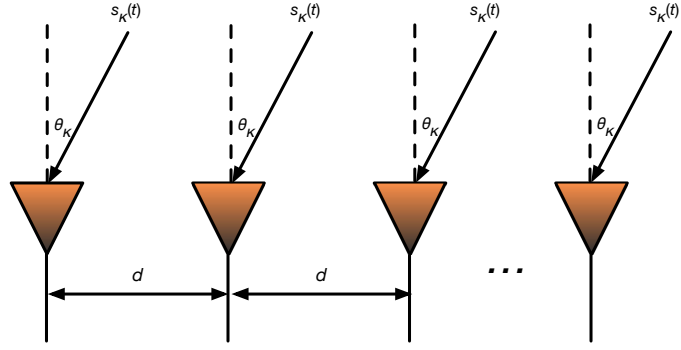}
	\caption{The ULA system for direction finding.}
	\label{system}
\end{figure}

We consider the direction finding problem in {\hll the uniform linear array (ULA) system}, where $N$ antennas are adopted and the inter-antenna element spacing is $d$. As shown in Fig.~\ref{system}, $K$ unknown signals with different directions ($\theta_k$, $K=0,1,\dots,K-1$) are received by the ULA. Thus, the received signals in the $N$ antennas can be expressed as 
\begin{align}\label{recv}
 	\boldsymbol{y}(t)&=\sum_{k=0}^{K-1}\boldsymbol{C}\boldsymbol{a}(\theta_k)s_k(t)+\boldsymbol{n}(t)  =\boldsymbol{CA}\boldsymbol{s}(t)+\boldsymbol{n}(t),
 \end{align}
 where the matrix $\boldsymbol{C}\in\mathbb{C}^{N\times N}$ denotes the mutual coupling matrix, and {\hll the signals are collected into a vector
 $\boldsymbol{s}(t)\triangleq \begin{bmatrix}
 	s_0(t),s_1(t),\dots,s_{K-1}(t)
 \end{bmatrix}^\text{T}$ with the $k$-th signal being  $s_k(t)$}. 
 Then, the received signals in the ULA antennas can be {\hll expressed} as
 $
 	\boldsymbol{y}(t)\triangleq\begin{bmatrix}
 	y_0(t),y_1(t),\dots,y_{N-1}(t)
 \end{bmatrix}^\text{T}$, 
and {\hll the zero-mean additive white Gaussian noise (AWGN) with the variance being $\sigma_\text{n}^2$} is 
$
	\boldsymbol{n}(t)\triangleq\begin{bmatrix}
 	n_0(t),n_1(t),\dots,n_{N-1}(t)
 \end{bmatrix}^\text{T}$. 
In this paper, we suppose that the noise variance $\sigma_\text{n}^2$ is unknown.  $\boldsymbol{A}\in\mathbb{C}^{N\times K}$ denotes the steering matrix for the $K$ signals, and can be expressed as
 \begin{align}
 	\boldsymbol{A}&\triangleq\begin{bmatrix}
 	\boldsymbol{a}(\theta_0),\boldsymbol{a}(\theta_1),\dots,\boldsymbol{a}(\theta_{K-1})
 \end{bmatrix},
 \end{align}
 where the steering vector for the $k$-th signal can be written as $
 	\boldsymbol{a}(\theta_k)\triangleq\begin{bmatrix}
 		a_0(\theta_k),a_1(\theta_k),\dots,a_{N-1}(\theta_k)
 	\end{bmatrix}^\text{T}$, $a_n(\theta_k)=e^{j2\pi\frac{nd}{\lambda}\sin\theta_k}$, and $\lambda $ denotes the wavelength. 

In this paper, we consider the direction finding problem with unknown mutual coupling effect between antennas, and the mutual coupling effect can be described usually by a symmetric Toeplitz matrix~\cite{ce2017}. As expressed in (\ref{recv}), the  mutual coupling matrix  can be represented as 
\begin{align}
	\boldsymbol{C}=  
	\begin{bmatrix}
	1 & c_{1} &\dots & c_{N-1}\\
	c_{1} & 1 & \dots & c_{N-2}\\
	\vdots & \vdots & \ddots & \vdots\\
	c_{N-1} & \dots & c_{1} & 1
	\end{bmatrix},\label{C}
\end{align}
where $c_n$ ($n=1,2,\dots, N-1$) denotes the mutual coupling coefficient between the $n_1$-th antenna and the $n_2$-th antenna, and $|n_1-n_2|=n$.

The signal model in (\ref{recv}) is a contiiuous domain model, and after the uniform sampling, a discrete model can be obtained in a matrix form as
\begin{align}\label{Y}
	\boldsymbol{Y}=\boldsymbol{CAS}+\boldsymbol{N},
\end{align}
where the sampling interval is $T_s$, the number of the samples is $M$. $\boldsymbol{Y}\in\mathbb{C}^{N\times M}$, $\boldsymbol{S}\in\mathbb{C}^{K\times M}$ and $\boldsymbol{N}\in\mathbb{C}^{N\times M}$ are expressed as $
	\boldsymbol{Y} =\begin{bmatrix}
		\boldsymbol{y}(0),\boldsymbol{y}(T_s),\dots, \boldsymbol{y}((M-1)T_{s})
	\end{bmatrix}$,  $
	\boldsymbol{S} =\begin{bmatrix}
		\boldsymbol{s}(0),\boldsymbol{s}(T_s),\dots,\boldsymbol{s}((M-1)T_{s})
	\end{bmatrix}$, $
	\boldsymbol{N} =\begin{bmatrix}
		\boldsymbol{n}(0),\boldsymbol{n}(T_s),\dots,\boldsymbol{n}((M-1)T_{s})
	\end{bmatrix}$. 
To simplify the notations, we define $\boldsymbol{y}_m\triangleq \boldsymbol{y}(mT_s)$, $\boldsymbol{n}_m\triangleq \boldsymbol{n}(mT_s)$ and $\boldsymbol{s}_m\triangleq \boldsymbol{s}(mT_s)$, so we have $\boldsymbol{Y}=\begin{bmatrix}
		\boldsymbol{y}_0,\boldsymbol{y}_1,\dots, \boldsymbol{y}_{M-1}
	\end{bmatrix}$, $
	\boldsymbol{S}=\begin{bmatrix}
		\boldsymbol{s}_0,\boldsymbol{s}_1,\dots,\boldsymbol{s}_{M-1}
	\end{bmatrix}$, $
	\boldsymbol{N}=\begin{bmatrix}
		\boldsymbol{n}_0,\boldsymbol{n}_1,\dots,\boldsymbol{n}_{M-1}
	\end{bmatrix}$.

However, the system model in (\ref{Y}) is hard to solve directly with the unknown mutual coupling matrix $\boldsymbol{C}$, so we try to express the matrix $\boldsymbol{C}$ in a vector form. The mutual coupling matrix in (\ref{C}) can be described alternatively by a vector $\boldsymbol{c}$ as $
	\boldsymbol{C}=\operatorname{Toeplitz}\{\boldsymbol{c}\}$,
where $\boldsymbol{c}\triangleq \begin{bmatrix}
	1,c_1,\dots,c_{N-1}
\end{bmatrix}^\text{T}$ is the first column of $\boldsymbol{C}$, and $\operatorname{Toeplitz}\{\cdot\}$ denotes the Toeplitz transformation. Therefore, after the simplification, the received signals during the $m$-th sampling interval in (\ref{Y}) can be rewritten as
\begin{align}\label{qq}
	\boldsymbol{y}_m&=\boldsymbol{CAs}_m+\boldsymbol{n}_m =\boldsymbol{Q}(\boldsymbol{s}_m\otimes \boldsymbol{c}) +\boldsymbol{n}_m, 
\end{align}
where the mutual coupling effect is expressed by a vector $\boldsymbol{c}$, and we use a matrix $\boldsymbol{Q}\in\mathbb{C}^{N\times KN}$ to rearrange the steering matrix $\boldsymbol{A}$.

According to the lemma in~\cite{Termos:2004ch,LIU2012517,ce2017}, the matrix $\boldsymbol{Q}$ in (\ref{qq}) can be obtained as 
\begin{align}
	\boldsymbol{Q}\triangleq\begin{bmatrix}
		\boldsymbol{Q}(\theta_0),\boldsymbol{Q}(\theta_1),\dots,\boldsymbol{Q}(\theta_{K-1})
	\end{bmatrix}.
\end{align}
The $k$-th sub-matrix $\boldsymbol{Q}(\theta_k)\in \mathbb{C}^{N\times N}$ is $
\boldsymbol{Q}(\theta_k)=\boldsymbol{Q}_{1}(\theta_k)+\boldsymbol{Q}_{2}(\theta_k)\label{Qa}$, 
where $\boldsymbol{Q}_{1}(\theta_k)$ and $\boldsymbol{Q}_{2}(\theta_k)$  respectively are
\begin{align}
	\boldsymbol{Q}_{1}(\theta_k)&\triangleq\begin{bmatrix}
a_0(\theta_k) &a_1(\theta_k) &\dots &a_{N-1}(\theta_k)\\
a_1(\theta_k) &a_2(\theta_k) &\dots &0\\
\vdots &\vdots &\ddots &\vdots\\
a_{N-2}(\theta_k) &a_{N-1}(\theta_k) &\dots &0\\
a_{N-1}(\theta_k) &0 &\dots &0
\end{bmatrix},\\
\boldsymbol{Q}_{2}(\theta_k)&\triangleq\begin{bmatrix}
0 &0 &\dots &0 &0\\
0 &a_0(\theta_k) &\dots &0&0\\
\vdots &\vdots &\ddots &\vdots & \vdots\\
0 &a_{N-3}(\theta_k) &\dots &a_{0}(\theta_k)&0\\
0 &a_{N-2}(\theta_k)&\dots &a_{1}(\theta_k)&a_{0}(\theta_k)
\end{bmatrix}.
\end{align}
Therefore, by collecting the $M$ samples into a matrix, the received signals in (\ref{Y}) can be finally rewritten as
\begin{align}\label{YY}
\boldsymbol{Y} =\boldsymbol{Q}(\boldsymbol{S}\otimes \boldsymbol{c}) +\boldsymbol{N}.
\end{align}

In this paper, we will propose a high resolution method to estimate the directions ($\theta_0$, $\theta_1$,\dots,$\theta_{K-1}$)   from the received signal matrix $\boldsymbol{Y}$, where the signal matrix $\boldsymbol{S}$, the mutual coupling vector $\boldsymbol{c}$ and the  noise variance  $\sigma^2_\text{n}$ are all unknown.


\section{Direction Finding Method Based on Sparse Bayesian Learning}\label{sec3}

In this section, we propose a novel SBL-based method to estimate the directions ({\hll named \textbf{D}irection \textbf{F}inding based on \textbf{S}BL with \textbf{M}utual \textbf{C}oupling effect, DFSMC}). The sparse model will be established first, and the DFSMC will be proposed with the distribution assumptions of unknown parameters.

\begin{figure}
	\centering
	\includegraphics[width=3in]{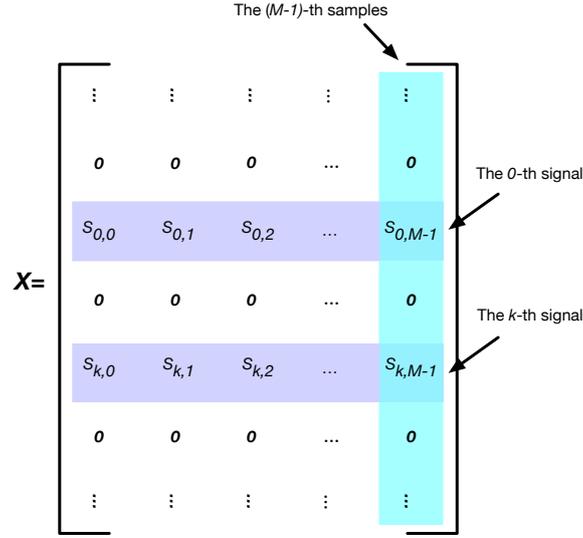}
	\caption{The structure of sparse matrix $\boldsymbol{X}$. }
	\label{Xmat}
\end{figure}

\subsection{Sparse-Based Signal Model}
Since the received signals are sparse in the spatial domain, we propose a sparse-based model to estimate the directions with unknown mutual coupling effect. In the sparse-based model, the dictionary matrix must be established first, {\hll so an over-complete dictionary matrix can be formulated by discretizing the signal direction  uniformly in the spatial domain}
\begin{align}
	\boldsymbol{D}\triangleq\begin{bmatrix}
		\boldsymbol{Q}(\zeta_0),\boldsymbol{Q}(\zeta_1),\dots,\boldsymbol{Q}(\zeta_{U-1})
	\end{bmatrix}\in\mathbb{C}^{N\times UN},
\end{align}
where $\zeta_u$ denotes the $u$-th discretized direction ($u=0,1,\dots,U-1$), $U$ denotes the number of discretized directions, the grid size is defined as  $\delta\triangleq |\zeta_{u+1}-\zeta_u|$, and we use a vector to contain all the discretized directions $\boldsymbol{\zeta}\triangleq \begin{bmatrix}
	\zeta_0,\zeta_1,\dots, \zeta_{U-1}
\end{bmatrix}$.

With the discretized directions and the system model in (\ref{YY}), a sparse-based \emph{on-grid} direction finding model can be expressed as
\begin{align}
	\boldsymbol{Y}=\boldsymbol{D}(\boldsymbol{X}\otimes \boldsymbol{c})+\boldsymbol{N},
\end{align}
where $\boldsymbol{X}$ is a sparse matrix
\begin{align}\label{SP}
	\boldsymbol{X}\triangleq \begin{bmatrix}
	\boldsymbol{x}_0,\boldsymbol{x}_1,\dots,\boldsymbol{x}_{M-1}
\end{bmatrix}\in\mathbb{C}^{U\times M}.
\end{align}
The structure of sparse matrix $\boldsymbol{X}$ is shown in Fig.~\ref{Xmat}, and 
the sparse vectors ($\boldsymbol{x}_0$, $\boldsymbol{x}_1$, $\dots$, $\boldsymbol{x}_{M-1}$) have the same support sets. When the direction of the $k$-th received signal $\theta_k$ is equal to the $u_k$-th discretized direction $\zeta_{u_k}$, we have $X_{u_k,m}=S_{k,m}$, so the $u$-th row and $m$-th column of $\boldsymbol{X} $ is 
\begin{align}
	X_{u,m}=\begin{cases}
		S_{k,m},&u=u_k\\
		0,&\text{otherwise}
	\end{cases}.
\end{align}

The sparse-based model in (\ref{SP}) assumes that the directions of received signals are exactly on the discretized grids. However, in the practical direction finding system, when the direction $\theta_k$ is not on the discretized grids, the direction $\theta_k$ can be represented by $\zeta_{u_k}$, which is a grid nearest to $\theta_k$. Thus, the corresponding matrix $\boldsymbol{Q}(\theta_k)$ in (\ref{Qa}) can be approximated by
\begin{align}\label{Q}
	\boldsymbol{Q}(\theta_k) \approx \boldsymbol{Q}(\zeta_{u_k}) +(\theta_k-\zeta_{u_k})\boldsymbol{\Omega}(\zeta_{u_k}),
\end{align}
where  the first-order derivative is defined as $	\boldsymbol{\Omega}(\zeta_{u_k}) \triangleq \left.\frac{\partial \boldsymbol{Q}(\zeta)}{\partial \zeta}\right|_{\zeta=\zeta_{u_k}}$. 
For example, as shown in Fig.~\ref{dis}, the direction of signal $s_k(t)$ is $\theta_k$, and the nearest grid is $\zeta_3$. Thus, the corresponding matrix $\boldsymbol{Q}(\theta_k)$ in (\ref{Q}) can be written as $
	\boldsymbol{Q}(\theta_k) \approx \boldsymbol{Q}(\zeta_{3}) +(\theta_k-\zeta_{3})\boldsymbol{\Omega}(\zeta_{3})$. 

\begin{figure}
	\centering
	\includegraphics[width=3.5in]{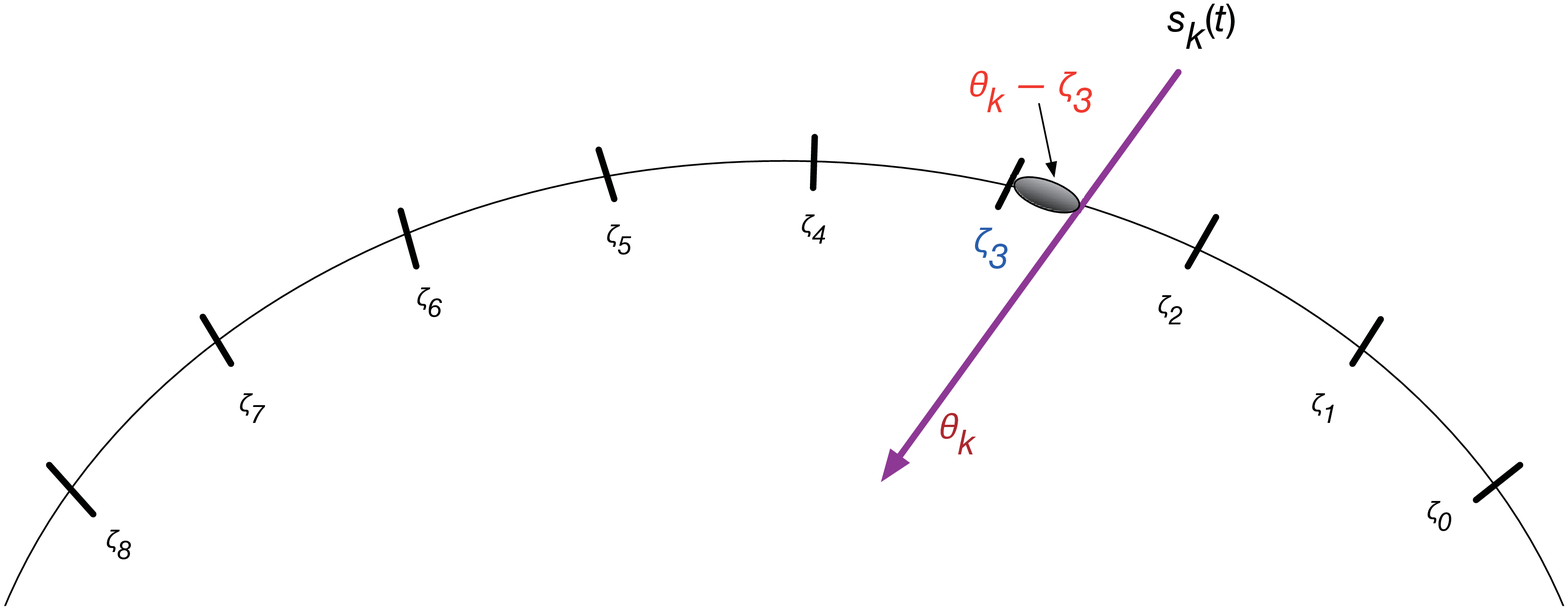}
	\caption{The off-grid approximation for direction. }
	\label{dis}
\end{figure}

Therefore, with the approximation in (\ref{Q}), the received signal in (\ref{YY}) can be approximated by a sparse-based \emph{off-grid} model 
\begin{align}\label{cs}
	\boldsymbol{Y}\approx \boldsymbol{\Psi}(\boldsymbol{\nu})(\boldsymbol{X}\otimes \boldsymbol{c})+\boldsymbol{N},
\end{align}
where $
	\boldsymbol{\Psi}(\boldsymbol{\nu})\triangleq \boldsymbol{D}+\boldsymbol{\Xi}\left(\operatorname{diag}\left\{
	\boldsymbol{\nu} \right\}\otimes \boldsymbol{I}_{N}\right)$, $
	\boldsymbol{\Xi}\triangleq\begin{bmatrix}
	\boldsymbol{\Xi}_0,\boldsymbol{\Xi}_1,\dots,\boldsymbol{\Xi}_{U-1}
\end{bmatrix}$, 
and the $u$-th submatrix of $\boldsymbol{\Xi}$ is denoted as $\boldsymbol{\Xi}_u\triangleq \boldsymbol{\Omega}(\zeta_u)$. Additionally, a vector $\boldsymbol{\nu}\in\mathbb{R}^{U\times 1}$ is used to represent the off-grid directions, and the $u$-th entry is
\begin{align}
	\nu_u=\begin{cases}
		\theta_k-\zeta_{u_k}, & u=u_k\\
		0, & \text{otherwise}
	\end{cases}.
\end{align}

Finally, an off-grid sparse-based model is formulated for the direction finding problem in (\ref{cs}).   We will estimate the directions by reconstructing the sparse matrix $\boldsymbol{X}$. The positions of non-zero entries in $\boldsymbol{X}$ indicate the directions of received signals. Simultaneously, the unknown parameters including the mutual coupling vector $\boldsymbol{c}$, the noise variance $\sigma^2_{\text{n}}$ and the off-grid vector $\boldsymbol{\nu}$ will also be estimated. 

\subsection{Distribution Assumptions}

\begin{figure*}
	\centering
	\includegraphics[width=5.5in]{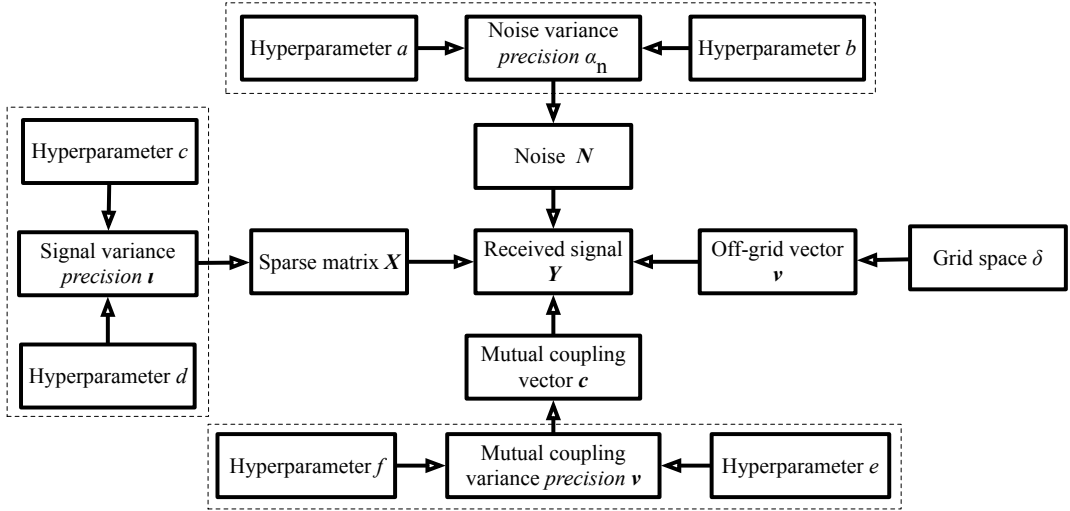}
	\caption{Graphical model of sparse bayesian learning for direction estimation. }
	\label{Bayesian}
\end{figure*}

In the proposed DFSMC method, the sparse Bayesian learning theory is adopted, and the method is established based on the distribution assumptions of all the unknown parameters. We assume that the unknown parameters follow the following distributions:
\begin{itemize}
	\item Noise $\boldsymbol{N}$: Gaussian distribution;
	\item The precision of noise variance $\alpha_{\text{n}}$: Gamma distribution;
	\item Sparse matrix $\boldsymbol{X}$: Gaussian distribution;
	\item The precision of signal variance $\boldsymbol{\iota}$: Gamma distribution;
	\item Mutual coupling vector $\boldsymbol{c}$: Gaussian distribution;
	\item The precision of mutual coupling variance $\boldsymbol{\vartheta}$: Gamma distribution;
	\item Off-grid vector $\boldsymbol{\nu}$: Uniform distribution.
\end{itemize}
The relationships between all the unknown parameters are given in Fig.~\ref{Bayesian}, and we will describe the distributions more clear in the following contents.

\subsubsection{The distribution of noise} 
When the received signals are independent between different samples, with the assumption of circular symmetric white Gaussian noise, the distribution of noise can be expressed as
\begin{align}
	p(\boldsymbol{N}|\sigma^2_\text{n})=\prod^{M-1}_{m=0}\mathcal{CN}(\boldsymbol{n}_m|\boldsymbol{0}_{N\times 1},\sigma^2_\text{n}\boldsymbol{I}_{N}),
\end{align}
where $\sigma^2_\text{n}$ denotes the noise variance, and the complex Gaussian distribution with the mean being $\boldsymbol{\mu}$ and the covariance matrix being $\boldsymbol{\Sigma}$ is expressed as
\begin{align}
	\mathcal{CN}(\boldsymbol{x}|\boldsymbol{\mu},\boldsymbol{\Sigma})=\frac{1}{\pi^N\det(\boldsymbol{\Sigma})}e^{-(\boldsymbol{x}-\boldsymbol{\mu})^\text{H}\boldsymbol{\Sigma}^{-1}(\boldsymbol{x}-\boldsymbol{\mu})}.
\end{align}
\subsubsection{The distribution of noise variance $\sigma_\text{n}^2$} 
In this paper, the noise variance is unknown.
Since the Gamma distribution is a conjugate prior of Gaussian distribution, the posterior distribution also follows a Gamma distribution. Therefore, using the Gamma distribution can simplify the following analysis.  With the unknown noise variance $\sigma^2_\text{n}$, we use a Gamma distribution to describe the precision of noise variance $\alpha_\text{n}\triangleq \sigma^{-2}_\text{n}$, and we have the following Gamma distribution
\begin{align}
	p(\alpha_\text{n})=g(\alpha_\text{n};a,b),
\end{align}
where $a$ and $b$ are the hyperparameters for $\alpha_\text{n}$, and $
	g(\alpha_\text{n};a,b) \triangleq \Gamma^{-1}(a)b^a\alpha_\text{n}^{a-1}e^{-b\alpha_\text{n}}$, $
	\Gamma(a) \triangleq \int^\infty_0x^{a-1}e^{-x}dx$.

\subsubsection{The distribution of sparse matrix}
With the independent received signals $\boldsymbol{S}$ among samples, we can assume that the sparse matrix $\boldsymbol{X}$ follows a zero-mean Gaussian distribution
\begin{align}
	p(\boldsymbol{X}|\boldsymbol{\Lambda}_\text{x})=\prod_{m=0}^{M-1}\mathcal{CN}(\boldsymbol{x}_m|\boldsymbol{0}_{U\times 1}, \boldsymbol{\Lambda}_\text{x}),
\end{align}
where the covariance matrix $\boldsymbol{\Lambda}_\text{x}\in\mathbb{R}^{U\times U}$ is a diagonal matrix with the $u$-th diagonal entry being $\sigma^2_{\text{x},u}$. 
\subsubsection{The distribution of signal variance}
Similarity, with the unknown signal variance $\boldsymbol{\Lambda}_{\text{x}}$, we define the precision vector $
	\boldsymbol{\iota}\triangleq \begin{bmatrix}
	\iota_0,\iota_1,\dots,\iota_{U-1}
\end{bmatrix}^\text{T}$, 
where $\iota_u\triangleq \sigma^{-2}_{\text{x},u}$, so $\boldsymbol{\iota}$ can be expressed by a Gamma prior
\begin{align}
	p(\boldsymbol{\iota};c,d)=\prod^{U-1}_{u=0}g(\iota_u;c,d),
\end{align} 
where $c$ and $d$ are the hyperparmaters for $\boldsymbol{\iota}$.
\subsubsection{The distribution of mutual coupling vector}
With the unknown mutual coupling vector, when the  mutual coupling coefficients are independent between antennas, the distribution of mutual coupling vector $\boldsymbol{c}$ can be expressed as a Gaussian distribution
\begin{align}
	p(\boldsymbol{c}|\boldsymbol{\Lambda}_\text{c})&=\prod^{N-1}_{n=0}\mathcal{CN}(c_{n}|0,\boldsymbol{\Lambda}_\text{c}),\end{align}
where the covariance matrix $\boldsymbol{\Lambda}_\text{c}\in\mathbb{R}^{N\times N}$ is a diagonal matrix with the $n$-th diagonal entry being $\sigma^2_{\text{c},n}$. 
\subsubsection{The distribution of mutual coupling variance}
By defining the precisions $\boldsymbol{\vartheta}\triangleq\begin{bmatrix}
	\vartheta_{0},\vartheta_{1},\dots,\vartheta_{N-1}
\end{bmatrix}^\text{T}$ ($\vartheta_{n}\triangleq\sigma^{-2}_{\text{c}, n}$), we  use a Gamma distribution to describe the distribution of $
\boldsymbol{\vartheta}$
\begin{align}
p(\boldsymbol{\vartheta}; e, f)&=\prod^{N-1}_{n=0}g(\vartheta_{n};e,f),\end{align}
where both $e$ and $f$ are the hyperparameters of $\boldsymbol{\vartheta}$. 

\subsubsection{The distribution of off-grid vector}
We can assume that the off-grid vector $\boldsymbol{\nu}$ follows a uniform  distribution, and the distribution of $\nu_u$ can be expressed as
\begin{align}
	p(\nu_u;\delta)=\mathcal{U}_{\nu_u}\left(\left[-\frac{1}{2}\delta,\frac{1}{2}\delta\right]\right),
\end{align}
where the uniform distribution is defined as
\begin{align}
	\mathcal{U}_{x}\left(\left[a,b\right]\right)\triangleq\begin{cases}
		\frac{1}{b-a}, &a\leq x\leq b\\
		0,&\text{otherwise}
	\end{cases}.
\end{align}


\subsection{DFSMC method}
With the distribution assumptions of unknown parameters, a novel direction finding method based on the SBL is proposed with the unknown mutual coupling effect, named DFSMC. In the SBL-based method, the posterior probabilities for all the unknown parameters are theoretically derived. 

To estimate the directions, we formulate the following problem to maximize the posterior probability 
\begin{align}\label{wp}
	\hat{\wp}=\arg\max_{\wp}p(\wp|\boldsymbol{Y}),
\end{align}
where a set $\wp\triangleq  \left\{ \boldsymbol{X},\boldsymbol{\nu},\boldsymbol{c}, \sigma^2_\text{n},\boldsymbol{\iota},\boldsymbol{\vartheta}\right\}$ is used to contain all the unknown parameters. 
However, the problem (\ref{wp}) is too complex and cannot be solved directly. The expectation maximum (EM)-based method is used to realize the proposed DFSMC method. Additionally, with the received signal $\boldsymbol{Y}$, the joint distribution with unknown parameters can be expressed as 
\begin{align}\label{diss}
	p(\boldsymbol{Y},\wp) &=p(\boldsymbol{Y}|\wp)p(\boldsymbol{X}|\boldsymbol{\iota}) p(\boldsymbol{c}|\boldsymbol{\vartheta})p(\alpha_\text{n}) p(\boldsymbol{\iota})p(\boldsymbol{\vartheta})p(\boldsymbol{\nu}).
\end{align}
The details to estimate all unknown parameters are given as follows.

\subsubsection{The sparse matrix}
Given the received signal $\boldsymbol{Y}$ and the parameters  $(\wp\backslash \boldsymbol{X})$ excepting $\boldsymbol{X}$, the the posterior of $\boldsymbol{X}$ can be expressed as 
\begin{align}
	p(\boldsymbol{X}|\boldsymbol{Y},\boldsymbol{\nu},\boldsymbol{c}, \alpha_\text{n},\boldsymbol{\iota},\boldsymbol{\vartheta}) & = \frac{p(\boldsymbol{Y}|\wp)p(\boldsymbol{X}|\boldsymbol{\iota})}{p(\boldsymbol{Y}|\boldsymbol{\nu},\boldsymbol{c}, \alpha_\text{n},\boldsymbol{\iota},\boldsymbol{\vartheta})} \propto p(\boldsymbol{Y}|\wp)p(\boldsymbol{X}|\boldsymbol{\iota})\label{X},
\end{align}
where both $p(\boldsymbol{Y}|\wp)$ and $p(\boldsymbol{X}|\boldsymbol{\iota})$ follow Gaussian distributions, and can be calculated as
\begin{align}
	p(\boldsymbol{Y}|\wp) 	& = \prod^{M-1}_{m=0}\frac{\alpha_\text{n}^{N}}{\pi^{N}}e^{-\alpha_\text{n}\|\boldsymbol{y}_m-\boldsymbol{\Psi}(\boldsymbol{\nu})(\boldsymbol{x}_m\otimes \boldsymbol{c})\|^2_2},\label{YYY}\\
p(\boldsymbol{X}|\boldsymbol{\iota}) 
& = \prod_{m=0}^{M-1} \left(\prod^{U-1}_{u=0}\iota_u\right) \frac{1}{\pi^U}e^{-\boldsymbol{x}^\text{H}_m\operatorname{diag}\{\boldsymbol{\iota}\}\boldsymbol{x}_m}. \label{XXX}
\end{align}
Therefore, the posterior of $\boldsymbol{X}$ is also a Gaussian function
\begin{align}\label{XX}
	&p(\boldsymbol{X}|\boldsymbol{Y},\boldsymbol{\nu},\boldsymbol{c}, \alpha_\text{n},\boldsymbol{\iota},\boldsymbol{\vartheta}) = \prod^{M-1}_{m=0}\mathcal{CN}(\boldsymbol{x}_m|\boldsymbol{\mu}_m,\boldsymbol{\Sigma}_\text{X}),
\end{align}
where the mean $\boldsymbol{\mu}_m$ and covariance matrix $\boldsymbol{\Sigma}_\text{X}$ are obtained from (\ref{YYY}) and (\ref{XXX}) as
\begin{align}
	\boldsymbol{\mu}_m&=\alpha_\text{n}^\text{H} \boldsymbol{\Sigma}^\text{H}_\text{X}
	\mathfrak{T}^H(\boldsymbol{\nu},\boldsymbol{c}) \boldsymbol{y}_m,\label{x}\\
	\boldsymbol{\Sigma}_\text{X}&=\left[\alpha_\text{n} 
	\mathfrak{T}^H(\boldsymbol{\nu},\boldsymbol{c})\mathfrak{T}(\boldsymbol{\nu},\boldsymbol{c})
	+ \operatorname{diag}\{\boldsymbol{\iota}\}\right]^{-1}\label{sigma},
\end{align}
and we define the following function
\begin{align}\label{TT}
	\mathfrak{T}(\boldsymbol{\nu},\boldsymbol{c})\triangleq \boldsymbol{\Psi}(\boldsymbol{\nu})(\boldsymbol{I}_{U}
	\otimes \boldsymbol{c}).
\end{align}
Additionally, to simplify the notations, the $u$-th entry of $\boldsymbol{\mu}_m$ is denoted as $\mu_{u,m}$, and we can collect all the mean $\boldsymbol{\mu}_m$ as a matrix $
	\boldsymbol{\mu}\triangleq\begin{bmatrix}
	\boldsymbol{\mu}_0,\boldsymbol{\mu}_1,\dots,\boldsymbol{\mu}_{M-1}
\end{bmatrix}$.

To estimate the other unknown parameters $\wp\backslash \boldsymbol{X}$, with (\ref{diss}), we can formulate the following likelihood function 
\begin{align}
	 \mathcal{L}(\boldsymbol{\nu},\boldsymbol{c}, \alpha_\text{n},\boldsymbol{\iota},\boldsymbol{\vartheta}) = \mathcal{E}\big\{&\ln  p(\boldsymbol{Y}|\wp)p(\boldsymbol{X}|\boldsymbol{\iota})  p(\boldsymbol{c}|\boldsymbol{\vartheta})p( \alpha_\text{n}) p(\boldsymbol{\iota})p(\boldsymbol{\vartheta})p(\boldsymbol{\nu})\big\}.  
\end{align}
where we just use $\mathcal{E}\{\cdot\}$ to represent $\mathcal{E}_{\boldsymbol{X}|\boldsymbol{Y},\boldsymbol{\nu}, \alpha_\text{n},\boldsymbol{\iota},\boldsymbol{\vartheta}}\{\cdot\}$. 
Thus, the EM-based method can be used to estimate $\wp\backslash \boldsymbol{X}$, and the details are given in the following  contents.  Additionally, the derivatives for the complex vector and matrix are given as the following lemma.
\begin{lemma}\label{lemma}
With both the complex vectors ($\boldsymbol{u}\in\mathbb{C}^{P\times 1}$, $\boldsymbol{v}\in\mathbb{C}^{P\times 1}$) and the complex matrix $\boldsymbol{A}\in\mathbb{C}^{M\times P}$ being the function of a complex vector $\boldsymbol{x}\in\mathbb{C}^{N\times 1}$, the following derivations can be obtained 
\begin{align}
&\frac{\partial\boldsymbol{u}^\text{H}\boldsymbol{v}  }{\partial \boldsymbol{x}} = \boldsymbol{v}^\text{T} \frac{\partial (\boldsymbol{u}^*)}{\partial \boldsymbol{x}} 
+\boldsymbol{u}^\text{H}\frac{\partial \boldsymbol{v}}{\partial \boldsymbol{x}},\\
&\frac{\partial \boldsymbol{Au}}{\partial \boldsymbol{x}}= \begin{bmatrix}
\frac{\partial \boldsymbol{A}}{\partial x_0}\boldsymbol{u} + \boldsymbol{A}\frac{\partial \boldsymbol{u}}{\partial x_0},
\ldots, \frac{\partial \boldsymbol{A}}{\partial x_n}\boldsymbol{u} + \boldsymbol{A}\frac{\partial \boldsymbol{u}}{\partial x_n} ,\ldots
\end{bmatrix}.
\end{align}
\end{lemma}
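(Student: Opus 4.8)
The plan is to prove both identities by reducing each to an entrywise computation and then invoking the ordinary scalar product (Leibniz) rule, which holds componentwise for the partial derivatives $\partial/\partial x_n$ irrespective of whether these are read as real partial derivatives or as Wirtinger derivatives, since any such operator is linear and satisfies Leibniz. The only genuine care needed is bookkeeping of layout: I would fix the convention that $\partial(\cdot)/\partial\boldsymbol{x}$ applied to a scalar produces a $1\times N$ row vector, and applied to an $M$-vector produces an $M\times N$ Jacobian whose $n$-th column is the derivative with respect to $x_n$. With that convention the matrix products on the right-hand sides are automatically conformable, which is really what the lemma is asserting.

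First, for the scalar inner product, I would expand $\boldsymbol{u}^\text{H}\boldsymbol{v}=\sum_{p=0}^{P-1}u_p^{*}v_p$ and differentiate each summand with respect to a single coordinate $x_n$. The scalar product rule gives $\partial(u_p^{*}v_p)/\partial x_n = v_p\,\partial u_p^{*}/\partial x_n + u_p^{*}\,\partial v_p/\partial x_n$. Summing over $p$ and recognizing the two resulting sums as $\boldsymbol{v}^\text{T}\left(\partial\boldsymbol{u}^{*}/\partial x_n\right)$ and $\boldsymbol{u}^\text{H}\left(\partial\boldsymbol{v}/\partial x_n\right)$ respectively, then stacking these scalars over $n=0,\dots,N-1$, reproduces the claimed identity $\partial(\boldsymbol{u}^\text{H}\boldsymbol{v})/\partial\boldsymbol{x}=\boldsymbol{v}^\text{T}\,\partial\boldsymbol{u}^{*}/\partial\boldsymbol{x}+\boldsymbol{u}^\text{H}\,\partial\boldsymbol{v}/\partial\boldsymbol{x}$.

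For the second identity, I would write the $i$-th entry of $\boldsymbol{A}\boldsymbol{u}$ as $(\boldsymbol{A}\boldsymbol{u})_i=\sum_j A_{ij}u_j$ and differentiate with respect to $x_n$; the scalar product rule yields $\partial(\boldsymbol{A}\boldsymbol{u})_i/\partial x_n=\sum_j(\partial A_{ij}/\partial x_n)u_j+\sum_j A_{ij}(\partial u_j/\partial x_n)$, which in vector form is exactly $(\partial\boldsymbol{A}/\partial x_n)\boldsymbol{u}+\boldsymbol{A}(\partial\boldsymbol{u}/\partial x_n)$. Collecting these column vectors as the successive columns indexed by $n$ then gives the block-row expression in the statement.

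I do not expect a serious obstacle, since both claims are Leibniz rules in disguise; the one subtlety that deserves an explicit sentence is the appearance of $\boldsymbol{u}^{*}$ rather than $\boldsymbol{u}$ in the first identity, which arises simply because the derivative acts on the already-conjugated factor $u_p^{*}$ and the conjugate is carried inside the differential operator. Once the row/column layout is fixed and the Leibniz property of the (Wirtinger) derivatives is noted, both displays follow immediately from the entrywise computations above.
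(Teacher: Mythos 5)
Your proposal is correct and follows essentially the same route as the paper's own proof: expand $\boldsymbol{u}^\text{H}\boldsymbol{v}$ (respectively $[\boldsymbol{Au}]_m$) entrywise, apply the scalar product rule to each coordinate derivative $\partial/\partial x_n$, and stack the results as a row vector (respectively as columns indexed by $n$). The layout convention and the remark about the conjugate $\boldsymbol{u}^*$ match what the paper does implicitly, so nothing is missing.
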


\begin{proof}
See: Appendix~\ref{derivation}.	
\end{proof}

\subsubsection{The mutual coupling vector}
Ignoring terms independent thereof in $\mathcal{L}(\boldsymbol{\nu},\boldsymbol{c}, \alpha_\text{n},\boldsymbol{\iota},\boldsymbol{\vartheta})$, we can obtain the following likelihood function for the mutual coupling vector $\boldsymbol{c}$
\begin{align}
	\mathcal{L}&(\boldsymbol{\boldsymbol{c}})  = \mathcal{E}\left\{\ln  p(\boldsymbol{Y}|\boldsymbol{X},\boldsymbol{\nu},\boldsymbol{c}, \alpha_\text{n}) p(\boldsymbol{c}|\boldsymbol{\vartheta})\right\}\notag\\
	&= \mathcal{E}\left\{\ln   \prod^{M-1}_{m=0}\mathcal{CN}(\boldsymbol{y}_m|\boldsymbol{\Psi}(\boldsymbol{\nu})(\boldsymbol{x}_m\otimes \boldsymbol{c}),\alpha^{-1}_\text{n}\boldsymbol{I}_{N})\right\} + \ln
	 \prod^{N-1}_{n=0}\mathcal{CN}(c_{n}|0,\vartheta^{-1}_{n}) \\
	 	 &\propto  -\alpha_\text{n} M \mathcal{G}_{1} (\boldsymbol{c},\boldsymbol{\nu})  -\sum^{M-1}_{m=0} \alpha_\text{n} \mathcal{G}_{2,m}(\boldsymbol{c},\boldsymbol{\nu}) - \mathcal{G}_3(\boldsymbol{c}),\notag
	 \end{align}	  
where we have
\begin{align}
	 \mathcal{E}\left\{\ln  p(\boldsymbol{Y}|\boldsymbol{X},\boldsymbol{\nu},\boldsymbol{c}, \alpha_\text{n}) \right\}   = &MN\ln  \frac{\alpha_{\text{n}}}{\pi} 
	 -\alpha_\text{n} M \mathcal{G}_{1} (\boldsymbol{c},\boldsymbol{\nu}) -\sum^{M-1}_{m=0} \alpha_\text{n} \mathcal{G}_{2,m}(\boldsymbol{c},\boldsymbol{\nu}),  
\end{align}	 
and $
	\mathcal{G}_{1} (\boldsymbol{c},\boldsymbol{\nu}) \triangleq
	 \operatorname{Tr}\left\{ \mathfrak{T}^H(\boldsymbol{\nu},\boldsymbol{c})\mathfrak{T}(\boldsymbol{\nu},\boldsymbol{c}) \boldsymbol{\Sigma}_\text{X}\right\}$, $\mathcal{G}_{2,m}(\boldsymbol{c},\boldsymbol{\nu})  \triangleq \|\boldsymbol{y}_m-\boldsymbol{\Psi}(\boldsymbol{\nu})(\boldsymbol{\mu}_m\otimes \boldsymbol{c})\|^2_2$, $
	  \mathcal{G}_3(\boldsymbol{c}) \triangleq \sum^{N-1}_{n=0}  \vartheta_{n} |c_{n}|^2$. 

To estimate the mutual coupling vector $\boldsymbol{c}$, we can maximize the likelihood function $\mathcal{L}(\boldsymbol{\boldsymbol{c}})$, and we have
\begin{align}
	\hat{\boldsymbol{c}}=\arg\max_{\boldsymbol{c}} \mathcal{L}(\boldsymbol{c}).
\end{align}
Therefore,  by setting $\frac{\partial \mathcal{L}(\boldsymbol{c})}{\partial \boldsymbol{c}}=\boldsymbol{0}$,  the mutual coupling vector can be obtained. We can calculate
\begin{align}\label{abcd}
	\frac{\partial \mathcal{L}(\boldsymbol{c})}{\partial \boldsymbol{c}}&= 
	 -\alpha_\text{n} M \frac{\partial \mathcal{G}_{1} (\boldsymbol{c},\boldsymbol{\nu})}{\partial \boldsymbol{c}}  -\sum^{M-1}_{m=0} \alpha_\text{n} \frac{\partial \mathcal{G}_{2,m}(\boldsymbol{c},\boldsymbol{\nu})}{\partial \boldsymbol{c}} - \frac{\partial \mathcal{G}_3(\boldsymbol{c})}{\partial \boldsymbol{c}}.
\end{align}

In (\ref{abcd}), $\frac{\partial \mathcal{G}_{1} (\boldsymbol{c},\boldsymbol{\nu}) }{\partial \boldsymbol{c}}$, $\frac{\partial \mathcal{G}_{2,m} (\boldsymbol{c},\boldsymbol{\nu}) }{\partial \boldsymbol{c}}$ and $\frac{\partial \mathcal{G}_{3} (\boldsymbol{c}) }{\partial \boldsymbol{c}}$ can be calculated as follows.
\begin{itemize}
	\item \emph{For $\frac{\partial \mathcal{G}_{1} (\boldsymbol{c},\boldsymbol{\nu}) }{\partial \boldsymbol{c}}$:} With the derivations of complex vector and matrix in Appendix~\ref{derivation}, $\frac{\partial \mathcal{G}_{1} (\boldsymbol{c},\boldsymbol{\nu}) }{\partial \boldsymbol{c}}$ is a row vector, and the $n$-th entry can be calculated as 
\begin{align}\label{60}
	&\left[\frac{\partial \mathcal{G}_{1} (\boldsymbol{c},\boldsymbol{\nu}) }{\partial \boldsymbol{c}}\right]_n = \operatorname{Tr}\left\{
\frac{\partial \mathfrak{T}^H(\boldsymbol{\nu},\boldsymbol{c})\mathfrak{T}(\boldsymbol{\nu},\boldsymbol{c}) \boldsymbol{\Sigma}_\text{X} }{\partial c_{n}}\right\}. 
	 \end{align}
	 Additionally, we can calculate 	
	 \begin{align}
	 & \frac{\partial 
	 \mathfrak{T}^H(\boldsymbol{\nu},\boldsymbol{c})\mathfrak{T}(\boldsymbol{\nu},\boldsymbol{c})\boldsymbol{\Sigma}_\text{X} }{\partial c_{n}} = \frac{\partial 
	 (\boldsymbol{I}_U\otimes \boldsymbol{c})^\text{H}
	 }{\partial c_{n}}\boldsymbol{\Psi}^\text{H}(\boldsymbol{\nu})
	 \mathfrak{T}(\boldsymbol{\nu},\boldsymbol{c})\boldsymbol{\Sigma}_\text{X}] +\mathfrak{T}^H(\boldsymbol{\nu},\boldsymbol{c})
	 \boldsymbol{\Psi}(\boldsymbol{\nu})\frac{\partial 
	 (\boldsymbol{I}_U\otimes \boldsymbol{c}) }{\partial c_{n}}\boldsymbol{\Sigma}_\text{X}\notag\\ 
	 & = \mathfrak{T}^H(\boldsymbol{\nu},\boldsymbol{c})
	 \boldsymbol{\Psi}(\boldsymbol{\nu})\left(\boldsymbol{I}_U\otimes   \frac{\partial
	\boldsymbol{c} }{\partial c_{n}}\right)\boldsymbol{\Sigma}_\text{X}  = \mathfrak{T}^H(\boldsymbol{\nu},\boldsymbol{c})
	 \mathfrak{T}(\boldsymbol{\nu},\boldsymbol{e}^N_n) \boldsymbol{\Sigma}_\text{X},
\end{align}	 
where $\boldsymbol{e}^N_n$ is a $N\times 1$ vector with the $n$-th entry being $1$ and  other entries being $0$. Therefore, the the $n$-th entry in (\ref{60}) can be simplified as
\begin{align}
	\left[\frac{\partial \mathcal{G}_{1} (\boldsymbol{c},\boldsymbol{\nu}) }{\partial \boldsymbol{c}}\right]_n&=\boldsymbol{c}^\text{H}\left(\sum_{p=0}^{U-1}\sum_{k=0}^{U-1}
	 \boldsymbol{\Psi}^\text{H}_p(\boldsymbol{\nu})\boldsymbol{\Psi}_k(\boldsymbol{\nu})
	 \Sigma_{\text{X},k,p}
	 \right) \boldsymbol{e}^N_n,
\end{align}
and we finally have the derivation of $\mathcal{G}_{1} (\boldsymbol{c},\boldsymbol{\nu})$ as
\begin{align}
	\frac{\partial \mathcal{G}_{1} (\boldsymbol{c},\boldsymbol{\nu}) }{\partial \boldsymbol{c}} &=\boldsymbol{c}^\text{H}\left(\sum_{p=0}^{U-1}\sum_{k=0}^{U-1}
	 \boldsymbol{\Psi}^\text{H}_p(\boldsymbol{\nu})\boldsymbol{\Psi}_k(\boldsymbol{\nu})
	 \Sigma_{\text{X},k,p}
	 \right).\notag
\end{align}

\item $\frac{\partial \mathcal{G}_{2,m} (\boldsymbol{c},\boldsymbol{\nu}) }{\partial \boldsymbol{c}}$  can be simplified as
\begin{align}
	\frac{\partial \mathcal{G}_{2,m} (\boldsymbol{c},\boldsymbol{\nu}) }{\partial \boldsymbol{c}}&=-[\boldsymbol{y}_m-\boldsymbol{\Psi}(\boldsymbol{\nu})(\boldsymbol{\mu}_m\otimes \boldsymbol{c})]^\text{H} \boldsymbol{\Psi}(\boldsymbol{\nu}) \frac{\partial \boldsymbol{\mu}_m\otimes \boldsymbol{c}}{\partial \boldsymbol{c}}\notag \\
		&=-[\boldsymbol{y}_m-\boldsymbol{\Psi}(\boldsymbol{\nu})(\boldsymbol{\mu}_m\otimes \boldsymbol{c})]^\text{H} \boldsymbol{\Psi}(\boldsymbol{\nu})  (\boldsymbol{\mu}_m\otimes \boldsymbol{I}_N). 
\end{align}

\item $\frac{\partial \mathcal{G}_{3} (\boldsymbol{c}) }{\partial \boldsymbol{c}}$ can be simplified as $
	\frac{\partial \mathcal{G}_{3} (\boldsymbol{c})}{\partial \boldsymbol{c}}=\boldsymbol{c}^{H}\operatorname{diag}\{\boldsymbol{\vartheta}\}$. 

\end{itemize}

Therefore, with (\ref{abcd}), the mutual coupling vector can be finally estimated as
\begin{align}\label{ctE}
	\hat{\boldsymbol{c}}=\boldsymbol{H}^{-1}\boldsymbol{z},
\end{align}
where
\begin{align}
	\boldsymbol{H}& =  \sum^{M-1}_{m=0} \alpha_\text{n}  
		\mathfrak{P}^H(\boldsymbol{\nu},\boldsymbol{\mu}_m)\mathfrak{P}(\boldsymbol{\nu},\boldsymbol{\mu}_m)  + \alpha_\text{n} M 
	 \left(\sum_{p=0}^{U-1}\sum_{k=0}^{U-1}
	 \boldsymbol{\Psi}^\text{H}_p(\boldsymbol{\nu})\boldsymbol{\Psi}_k(\boldsymbol{\nu})
	 \Sigma_{\text{X},k,p}
	 \right)^\text{H} +\operatorname{diag}\{\boldsymbol{\vartheta}\}, \\
	 \boldsymbol{z}&=  \sum^{M-1}_{m=0} \alpha_\text{n} \mathfrak{P}^H(\boldsymbol{\nu},\boldsymbol{\mu}_m)		\boldsymbol{y}_m,
		\end{align}
and we define $
		\mathfrak{P}(\boldsymbol{\nu},\boldsymbol{\mu}_m)   \triangleq \boldsymbol{\Psi}(\boldsymbol{\nu})(\boldsymbol{\mu}_m\otimes  \boldsymbol{I}_N)$. 
		
\subsubsection{For the precision of signal variance}		
Ignoring terms independent thereof in $\mathcal{L}(\boldsymbol{\nu},\boldsymbol{c}, \alpha_\text{n},\boldsymbol{\iota},\boldsymbol{\vartheta})$, we can obtain the likelihood function of $\boldsymbol{\iota}$ as 
\begin{align}
	\mathcal{L}(\boldsymbol{\iota})  = &\mathcal{E} \left\{\ln  p(\boldsymbol{X}|\boldsymbol{\iota}) p(\boldsymbol{\iota}) \right\} = \mathcal{E}\left\{\ln  \prod_{m=0}^{M-1}\mathcal{CN}(\boldsymbol{x}_m|\boldsymbol{0}_{U\times 1}, \boldsymbol{\Lambda}_\text{x})\right\}  +\ln
	 \prod^{U-1}_{u=0}g(\iota_u;c,d). 
\end{align}
Then, the precision of signal variance can be estimated by $
	\hat{\boldsymbol{\iota}}=\arg\max_{\boldsymbol{\iota}}\mathcal{L}(\boldsymbol{\iota})$. 

By setting $\frac{\partial \mathcal{L}(\boldsymbol{\iota})}{\partial \boldsymbol{\iota}}=0$, the $u$-th entry of $\boldsymbol{\iota}$ can be obtained as 
\begin{align}\label{beta1}
	\hat{\iota}_u =\frac{M+c-1}{d + M\Sigma_{\text{X},u,u} + \sum_{m=0}^{M-1} |\mu_{u,m}|^2}. 
\end{align}
{\hll In the iterative algorithm, (\ref{beta1}) can be rewritten as
\begin{align}\label{beta}
\hat{\iota}^{i+1}_u \approx\frac{M-1-\hat{\iota}^{i}_u \sum_{m=0}^{M-1} |\mu_{u,m}|^2}{d + M\Sigma_{\text{X},u,u}},
\end{align}	
where $\hat{\iota}^{i+1}_u$ and $\hat{\iota}^{i}_u$ are the esitmated results at the $(i+1)$-th and the $i$-th iterations, respectively.
}

\subsubsection{For $\alpha_\text{n}$}
Ignoring terms independent thereof in $\mathcal{L}(\boldsymbol{\nu},\boldsymbol{c}, \alpha_\text{n},\boldsymbol{\iota},\boldsymbol{\vartheta})$, we can obtain the likelihood function
\begin{align}
	 \mathcal{L}(\alpha_\text{n})&= \mathcal{E}\left\{\ln  p(\boldsymbol{Y}|\boldsymbol{X},\boldsymbol{\nu},\boldsymbol{c}, \alpha_\text{n})p(\alpha_\text{n}) \right\}\notag\\
	& = \mathcal{E}\left\{\ln  
	\prod_{m=0}^{M-1}\mathcal{CN}\left(\boldsymbol{y}_m|\boldsymbol{\Psi}(\boldsymbol{\nu})(\boldsymbol{x}_m\otimes \boldsymbol{c}), \sigma^2_\text{n}\boldsymbol{I}\right)
	 \right\}  +\ln
	g(\alpha_\text{n};a,b).
\end{align}
The precision of noise variance can be estimated by
\begin{align}\label{alphan}
	\hat{\alpha}_{\text{n}}=\arg\max_{\alpha_{\text{n}}}\mathcal{L}(\alpha_{\text{n}}).
\end{align}

By setting $\frac{\partial \mathcal{L}(\alpha_\text{n})}{\partial \alpha_\text{n}}=0$, we can obtain
\begin{align}\label{n1}
\hat{\alpha}_\text{n}&=\frac{MN+a-1}{M\mathcal{G}_{1} (\boldsymbol{c},\boldsymbol{\nu}) + \sum^{M-1}_{m=0}\mathcal{G}_{2,m}(\boldsymbol{c},\boldsymbol{\nu})  +b}.
\end{align}
 {\hll In the iterative algorithm, (\ref{n1}) can be rewritten as
 	\begin{align}\label{n}
 	\hat{\alpha}^{i+1}_\text{n}\approx\frac{MN-1- \hat{\alpha}^{i}_\text{n}\sum^{M-1}_{m=0}\mathcal{G}_{2,m}(\boldsymbol{c},\boldsymbol{\nu})}{M\mathcal{G}_{1} (\boldsymbol{c},\boldsymbol{\nu})   +b}.,
 	\end{align}	
 	where $\hat{\alpha}^{i+1}_\text{n}$ and $\hat{\alpha}^{i}_\text{n}$ are the esitmated results at the $(i+1)$-th and the $i$-th iterations, respectively.
 }

%

\subsubsection{For the precision of mutual coupling variance}
Ignoring terms independent thereof in $\mathcal{L}(\boldsymbol{\nu},\boldsymbol{c}, \alpha_\text{n},\boldsymbol{\iota},\boldsymbol{\vartheta})$, we can obtain the likelihood function 
\begin{align}
	\mathcal{L}(\boldsymbol{\vartheta}) &= \mathcal{E}\left\{\ln p(\boldsymbol{c}|\boldsymbol{\vartheta}) p(\boldsymbol{\vartheta}) \right\} = \mathcal{E}\left\{\ln  
	\prod^{N-1}_{n=0}\mathcal{CN}(c_{n}|0,\vartheta_{n}^{-1})\right\} +\ln
	\prod^{N-1}_{n=0}g(\vartheta_{n};e,f). 
\end{align}
The precision of mutual coupling variance can be estimated by $
	\hat{\boldsymbol{\vartheta}}=\arg\max_{\boldsymbol{\vartheta}}\mathcal{L}(\boldsymbol{\vartheta})$. 

{\hll
By setting $\frac{\partial \mathcal{L}(\boldsymbol{\vartheta})}{\partial \boldsymbol{\vartheta}}=\boldsymbol{0}$, we can obtain the $n$-th entry of $\boldsymbol{\vartheta}$ as 
\begin{align}\label{varT}
	 \hat{\vartheta}_{n} \approx \frac{1}{f+c_{n}^\text{H}c_{n}}.
\end{align}
}			

\subsubsection{For the off-grid vector}

Ignoring terms independent thereof in $\mathcal{L}(\boldsymbol{\nu},\boldsymbol{c}, \alpha_\text{n},\boldsymbol{\iota},\boldsymbol{\vartheta})$, we can obtain the likelihood function
\begin{align} 
	\mathcal{L}(\boldsymbol{\nu}) & = \mathcal{E} \left\{\ln  p(\boldsymbol{Y}|\boldsymbol{X},\boldsymbol{\nu},\boldsymbol{c}, \alpha_n) p(\boldsymbol{\nu})\right\} \propto - M\mathcal{G}_{1} (\boldsymbol{c},\boldsymbol{\nu})
	 -\sum^{M-1}_{m=0} \mathcal{G}_{2,m}(\boldsymbol{c},\boldsymbol{\nu}).  
	 \end{align}
	 
The off-grid vector can be estimated by
\begin{align}
	\hat{\boldsymbol{\nu}}=\arg\max_{\boldsymbol{\nu}}\mathcal{L}(\boldsymbol{\nu}).
\end{align}	 

Then,
$\frac{\partial \mathcal{G}_{1} (\boldsymbol{c},\boldsymbol{\nu})}{\partial \boldsymbol{\nu}} \in \mathbb{R}^{1\times U}$ is a row vector, and the $u$-th entry is
\begin{align}
&\left[\frac{\partial \mathcal{G}_{1} (\boldsymbol{c},\boldsymbol{\nu})}{\partial \boldsymbol{\nu}}\right]_u  = \operatorname{Tr}\left\{\frac{\partial 
\mathfrak{T}^H(\boldsymbol{\nu},\boldsymbol{c})
\mathfrak{T}(\boldsymbol{\nu},\boldsymbol{c})
\boldsymbol{\Sigma}_\text{X}}{\partial \nu_u}\right\}\notag\\
&\quad = \operatorname{Tr}\left\{
	\begin{bmatrix}
		\boldsymbol{0},
		\mathfrak{T}^H(\boldsymbol{\nu},\boldsymbol{c}) \boldsymbol{\Xi}_u\boldsymbol{c},\boldsymbol{0}
	\end{bmatrix}\boldsymbol{\Sigma}_\text{X}\right\} +\operatorname{Tr}\left\{\begin{bmatrix}
		\boldsymbol{0},
		\mathfrak{T}^H(\boldsymbol{\nu},\boldsymbol{c})
		 \boldsymbol{\Xi}_u\boldsymbol{c},\boldsymbol{0}
	\end{bmatrix}^\text{H} \boldsymbol{\Sigma}_\text{X}
\right\}\notag\\
&\quad=2\mathcal{R}\left\{\sum_{m=0}^{U-1}\boldsymbol{c}^\text{H}\boldsymbol{\Psi}^\text{H}_{m}(\boldsymbol{\nu})\boldsymbol{\Xi}_u\boldsymbol{c}\Sigma_{\text{X},u,m}\right\} =2\mathcal{R}\left\{
\left[\mathfrak{T}^H(\boldsymbol{\nu},\boldsymbol{c}) \boldsymbol{\Xi}_u\boldsymbol{c}\right]^H\boldsymbol{\Sigma}_{\text{X},:,u}
\right\}.
\end{align} 
$\frac{\partial \mathcal{G}_{1} (\boldsymbol{c},\boldsymbol{\nu})}{\partial \boldsymbol{\nu}} $ can be simplified as
\begin{align}
\frac{\partial \mathcal{G}_{1} (\boldsymbol{c},\boldsymbol{\nu})}{\partial \boldsymbol{\nu}}  = 2\mathcal{R}\left\{\operatorname{diag}\left\{
\boldsymbol{\Sigma}_\text{X}
\mathfrak{T}^H(\boldsymbol{\nu},\boldsymbol{c})
\boldsymbol{\Xi}(\boldsymbol{I}_U\otimes \boldsymbol{c})\right\}^\text{T}
	\right\}.
\end{align}  
Additionally, $\frac{\partial \mathcal{G}_{2,m}(\boldsymbol{c},\boldsymbol{\nu})}{\partial \boldsymbol{\nu}}$ can be obtained as
\begin{align}\label{gf}
&\frac{\partial \mathcal{G}_{2,m}(\boldsymbol{c},\boldsymbol{\nu})}{\partial \boldsymbol{\nu}}  = -2\mathcal{R}\left\{[\boldsymbol{y}_m-\boldsymbol{\Psi}(\boldsymbol{\nu})(\boldsymbol{\mu}_m\otimes \boldsymbol{c})]^\text{H}\frac{\partial \boldsymbol{\Psi}(\boldsymbol{\nu})(\boldsymbol{\mu}_m\otimes \boldsymbol{c})}{\partial \boldsymbol{\nu}}\right\}\\
& = -2\mathcal{R}\left\{[\boldsymbol{y}_m-\boldsymbol{\Psi}(\boldsymbol{\nu})(\boldsymbol{\mu}_m\otimes \boldsymbol{c})]^\text{H}  \boldsymbol{\Xi}(\operatorname{diag}\{\boldsymbol{\mu}_m\}\otimes \boldsymbol{c})
\right\}.\notag
	\end{align}

\begin{algorithm}
\setstretch{1.7}
	\caption{DFSMC algorithm for direction finding with the unknown mutual coupling effect} \label{alg1}
	\begin{algorithmic}[1]
		\STATE  \emph{Input:} received signal $\boldsymbol{Y}$, the number of samples $M$, the numbers  of iterations $N_1$, $N_2$ and $N_3$ dictionary matrix $\boldsymbol{D}$, the first order derivative of dictionary matrix $\boldsymbol{\Xi}$. Usually, we have $N_1=10^3$, $N_2=300$, $N_3=50$, $b=d=f=10^{-3}$, and $a=c=e=1+b$.
		\STATE \emph{Initialization:} $i_{\text{iter}}=1$, $s_{\text{method}}=0$, $i_{\text{method}}=1$, $\hat{\boldsymbol{c}}=\hat{\boldsymbol{\vartheta}}=[1,\boldsymbol{0}_{1\times (N-1)}]^T$, and $\hat{\boldsymbol{\nu}}=\boldsymbol{0}_{U\times 1}$.
		\STATE $\boldsymbol{\Psi}(\hat{\boldsymbol{\nu}})\leftarrow \boldsymbol{D}+\boldsymbol{\Xi}\left(\operatorname{diag}\left\{
	\hat{\boldsymbol{\nu}} \right\}\otimes \boldsymbol{I}_{N}\right)$.
		\WHILE{$i_{\text{iter}} \leq N_{1}$ }
		\STATE Obtain $\mathfrak{T}(\boldsymbol{\nu},\boldsymbol{c})$ from (\ref{TT}).
		\STATE \label{5} Obtain the mean $\boldsymbol{\mu}_m$ ($m=0,1,\dots,M-1$) and covariance matrix $\boldsymbol{\Sigma}_{\text{X}}$ from (\ref{x}) and (\ref{sigma}), respectively.
		\STATE Update the precision of noise variance $\hat{\alpha}_{\text{n}}$ from (\ref{n}).
		\STATE Update the precision of signal variance $\hat{\boldsymbol{\iota}}$ from (\ref{beta}).
		\STATE Obtain the spatial spectrum $
			P_\text{X}=\begin{bmatrix}\frac{1}{\hat{\iota}_0},\frac{1}{\hat{\iota}_1},\dots,\frac{1}{\hat{\iota}_{N-1}}\end{bmatrix}^\text{T}$.
		\IF{$i_{\text{iter}}\geq N_2$ and $s_{\text{method}}=1$}
		\STATE $i_{\text{method}}\leftarrow i_{\text{method}}+1$.
		\IF{$i_{\text{method}}=N_3$}
		\STATE $i_{\text{method}}\leftarrow 1$.
		\STATE $s_{\text{method}}\leftarrow 0$.
		\ENDIF
		\STATE \label{10} Update the off-grid vector $\hat{\boldsymbol{\nu}}$ from (\ref{nu}).
		\STATE $\boldsymbol{\Psi}(\hat{\boldsymbol{\nu}})\leftarrow \boldsymbol{D}+\boldsymbol{\Xi}\left(\operatorname{diag}\left\{
	\hat{\boldsymbol{\nu}} \right\}\otimes \boldsymbol{I}_{N}\right)$.
		\ENDIF	
		
		\IF{$i_{\text{iter}}\geq N_2$ and $s_{\text{method}}=0$}
		\STATE $i_{\text{method}}\leftarrow i_{\text{method}}+1$.
		\IF{$i_{\text{method}}=N_3$}
		\STATE $i_{\text{method}}\leftarrow 1$.
		\STATE $s_{\text{method}}\leftarrow 1$.
		\ENDIF
		\STATE Update the precision of mutual coupling variance $\hat{\boldsymbol{\vartheta}}$ from (\ref{varT}).
		\STATE \label{8} Update the mutual coupling vector $\hat{\boldsymbol{c}}$ from (\ref{ctE}).
		\ENDIF
		\STATE $i_{\text{iter}}\leftarrow i_{\text{iter}}+1$.
		\ENDWHILE
		\STATE \emph{Output:} the spatial spectrum $P_{\text{X}}$, and the directions $(\boldsymbol{\zeta}+\boldsymbol{\nu})$.
	\end{algorithmic}
\end{algorithm}

Therefore, with $\frac{\partial \mathcal{L}(\boldsymbol{\nu})}{\partial \nu_u}=0$, we can obtain  
\begin{align}\label{nu}
	\hat{\boldsymbol{\nu}}=\boldsymbol{G}^{-1}\boldsymbol{z},
\end{align}
where the entry of the $u$-th row in $\boldsymbol{G}\in\mathbb{R}^{U\times U}$ is
\begin{align}
	\boldsymbol{G}_{u,:}=& \mathcal{R}\Big\{M\boldsymbol{c}^H
\boldsymbol{\Xi}_u^H\boldsymbol{\Xi}
(\operatorname{diag}\left\{
\boldsymbol{\Sigma}_{\text{X},:,u}
	 \right\}\otimes \boldsymbol{c})\Big\} 
+\sum^{M-1}_{m=0} \mathcal{R}\Big\{\mu_{m,u} 
 \boldsymbol{c}^\text{T}
 \boldsymbol{\Xi}^T_u\boldsymbol{\Xi}^\text{*}(\boldsymbol{I}_U\otimes\boldsymbol{c}^*)
 \operatorname{diag}^*\left\{
	 \boldsymbol{\mu}_m \right\} 
	 \Big\},
\end{align}
and the $u$-th entry of $\boldsymbol{z}\in\mathcal{R}^{U\times 1}$ is
\begin{align}
	z_u 
& = \sum^{M-1}_{m=0} \mathcal{R}\Big\{[\boldsymbol{y}_m-\boldsymbol{D}(\boldsymbol{\mu}_m\otimes \boldsymbol{c})]^\text{H} \boldsymbol{\Xi}_u\mu_{m,u}\boldsymbol{c}\Big\}  -M\mathcal{R}\left\{
\boldsymbol{c}^H
\boldsymbol{\Xi}_u^H\boldsymbol{D}(\boldsymbol{I}_{U}
	\otimes \boldsymbol{c})
\boldsymbol{\Sigma}_{\text{X},:,u}\right\}.
\end{align}


In Algorithm~\ref{alg1}, we show the details of the proposed DFSMC method for the direction finding with the unknown mutual coupling effect. In the proposed DFSMC algorithm, after the iterations, we can obtain the spatial spectrum $P_{\text{X}}$ of the sparse matrix $\boldsymbol{X}$ from the received signal $\boldsymbol{Y}$. Then, by searching all the values of  $P_{\text{X}}$, the corresponding peak values can be found. By selecting positions of peak values corresponding to the $K$ maximum values, we can estimate the directions with $\boldsymbol{\zeta}+\boldsymbol{\nu}$.

\section{Simulation Results}\label{sec4}
\begin{table}[!t]
	\renewcommand{\arraystretch}{1.3}
	\caption{Simulation Parameters}
	\label{table1}
	\centering
	\begin{tabular}{cc}
		\hline
		\textbf{Parameter} & \textbf{Value}\\
		\hline
		The signal-to-noise ratio (SNR) & $ 20 $ dB\\
		The number of samples $M$ & $100$\\
		The number of antennas $N$ & $20$\\
		The number of signals $K$& $3$\\
		The space between antennas $d$& $0.5$ wavelength\\
		The grid space $\delta$ & $\ang{1}$\\
		The direction range & $\left[\ang{-60},\ang{60}\right]$\\
		The hyperparameters $b,d,f$ & $10^{-3}$\\
		 {\hll $N_1$ in Algorithm}~\ref{alg1} & $N_1=10^3$ \\
		{\hll $N_2$  in Algorithm~\ref{alg1}} & $N_2=300$ \\
		{\hll $N_3$ in Algorithm~\ref{alg1}} & $N_3=50$\\
	\hline
	\end{tabular}
\end{table}

\begin{figure}
	\centering
	\includegraphics[width=3.5in]{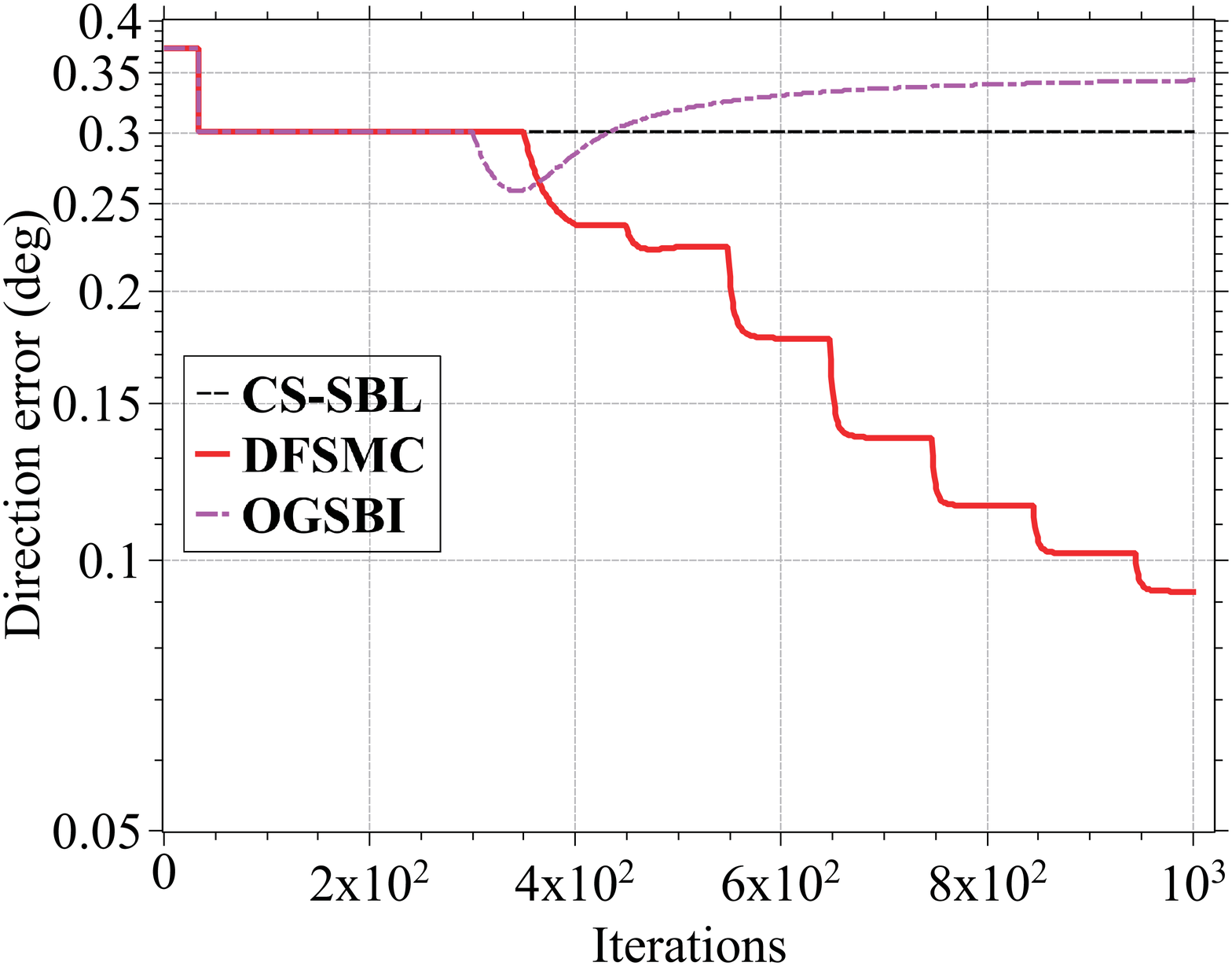}
	\caption{The estimation error with iterations ($\alpha_{\text{c}}=-8$ dB).}
	\label{iter8}
\end{figure}

\begin{figure}
	\centering
	\includegraphics[width=3.5in]{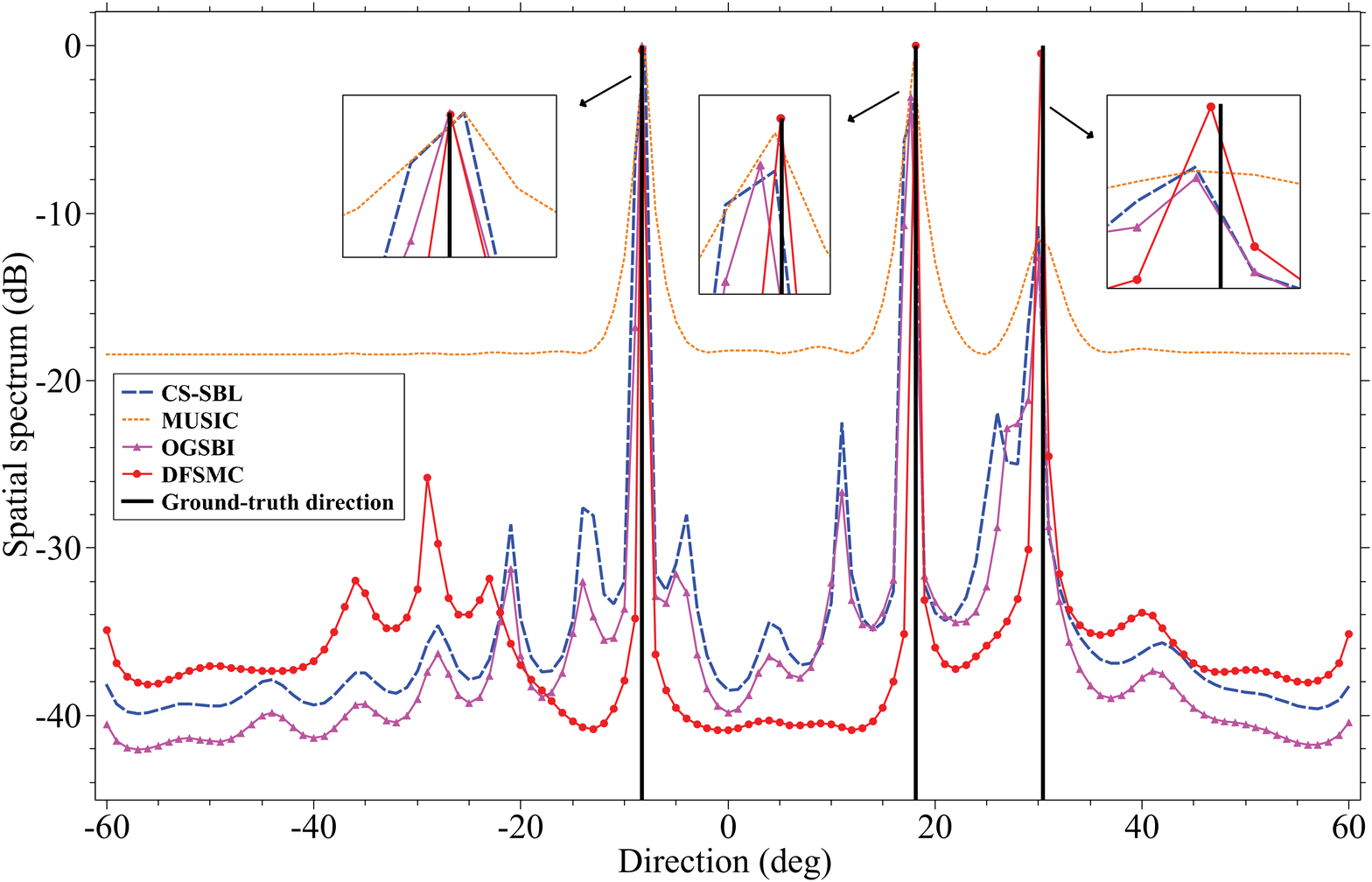}
	\caption{The spatial spectrum for direction estimation ($\alpha_{\text{c}}=-8$ dB).}
	\label{sp8}
\end{figure}

\begin{figure}
	\centering
	\includegraphics[width=3.5in]{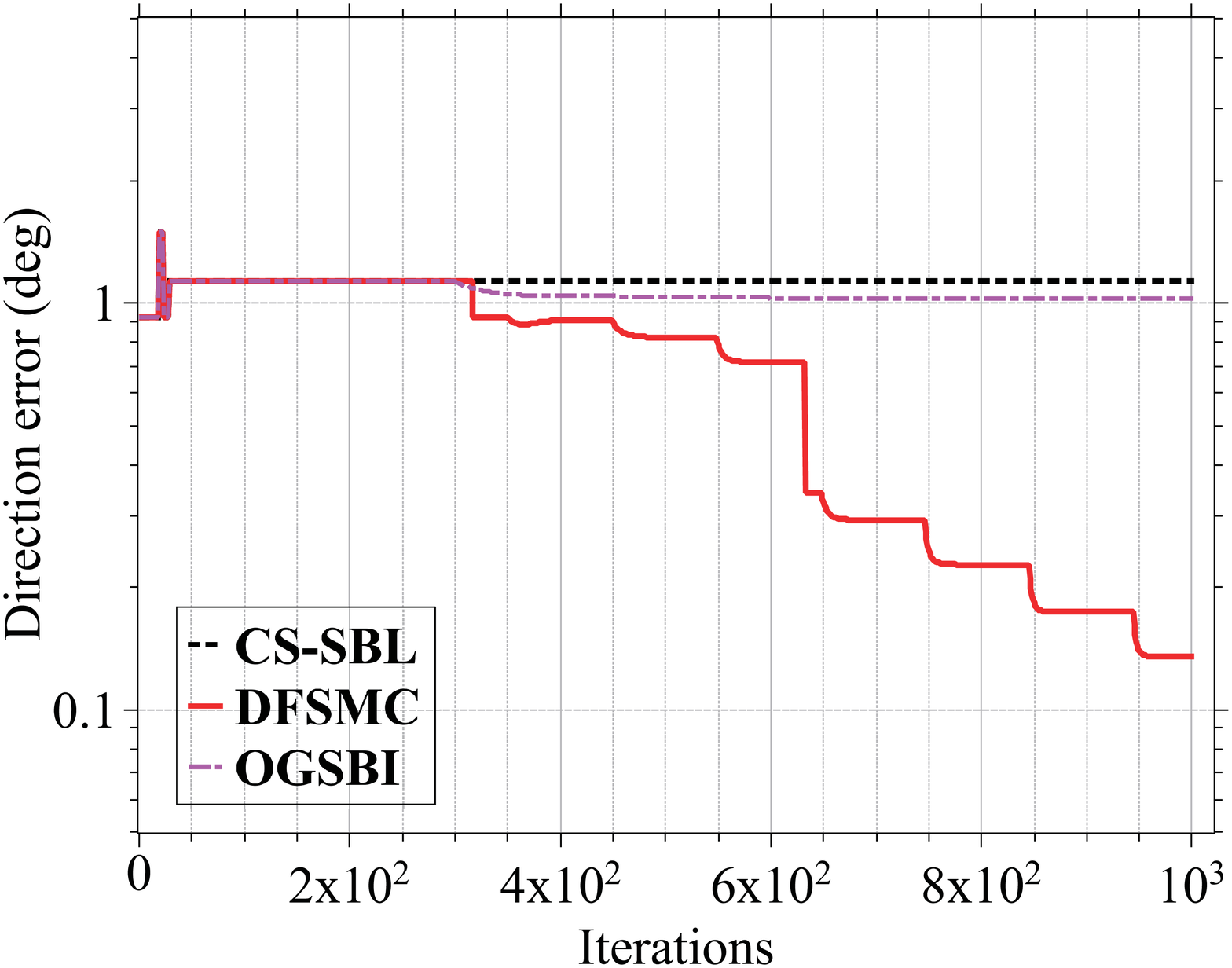}
	\caption{The estimation error with iterations ($\alpha_{\text{c}}=-5$ dB).}
	\label{iter}
\end{figure}

\begin{figure}
	\centering
	\includegraphics[width=3.5in]{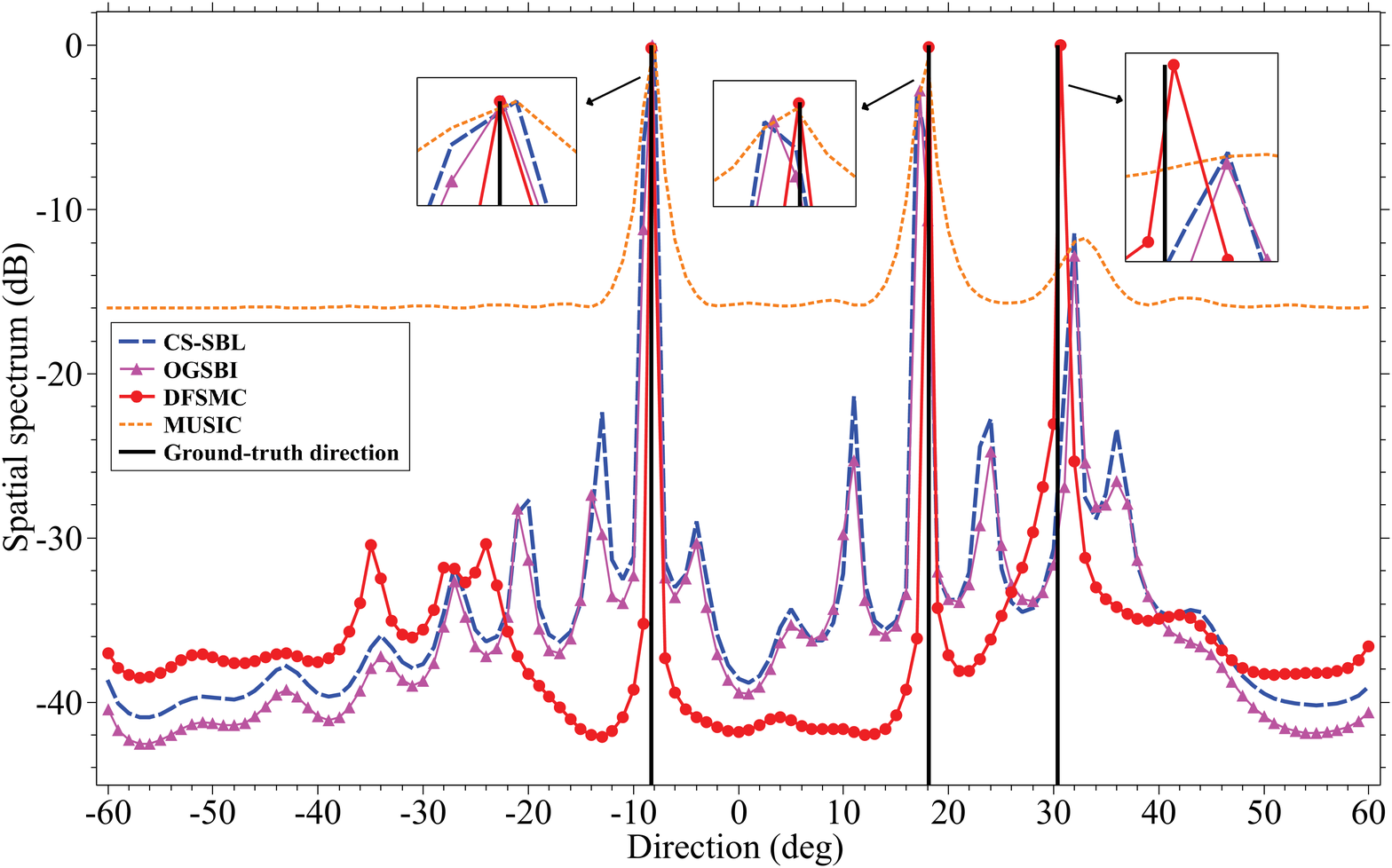}
	\caption{The spatial spectrum for direction estimation ($\alpha_{\text{c}}=-5$ dB).}
	\label{sp}
\end{figure} 


\begin{figure}
	\centering
	\includegraphics[width=3.5in]{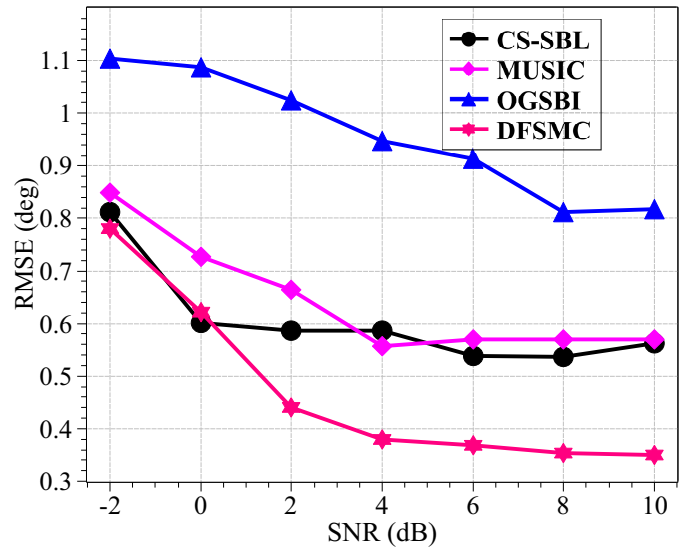}
	\caption{The direction estimation performance with different SNRs.}
	\label{SNR}
\end{figure}

\begin{figure}
	\centering
	\includegraphics[width=3.5in]{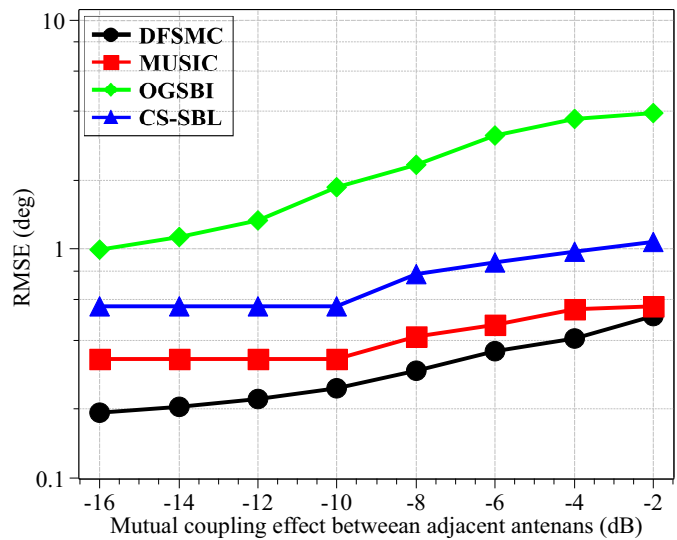}
	\caption{The direction estimation performance with different mutual coupling effect.}
	\label{MC}
\end{figure}


\begin{table}[!t]
	\renewcommand{\arraystretch}{1.3}
	\caption{Estimated directions ($\alpha_{\text{c}}=-8$ dB)}
	\label{table2}
	\centering
	\begin{tabular}{cccc}
		\hline
		\textbf{Methods} & \textbf{Signal $1$} & \textbf{Signal $2$} & \textbf{Signal $3$}\\
		\hline
		\textbf{Ground-truth directions} &$\ang{-8.268}$& $\ang{18.128}$& $\ang{30.428}$\\
		\textbf{OGSBI}& $\ang{-8.267}$ & $\ang{17.69}$&
  $\ang{30.02}$\\
		\textbf{CS-SBL}& $\ang{-8}$ & $\ang{18}$&
    $\ang{30}$\\
		\textbf{MUSIC}&
		$\ang{-8}$&$\ang{18}$&$\ang{30}$\\
		\textbf{DFSMC}&$\ang{-8.254}$&$\ang{18.13}$&$\ang{30.27}$\\
	\hline
	\end{tabular}
\end{table}

\begin{table}[!t]
	\renewcommand{\arraystretch}{1.3}
	\caption{Estimated directions ($\alpha_{\text{c}}=-5$ dB)}
	\label{table3}
	\centering
	\begin{tabular}{cccc}
		\hline
		\textbf{Methods} & \textbf{Signal $1$} & \textbf{Signal $2$} & \textbf{Signal $3$}\\
		\hline
		\textbf{Ground-truth directions} &$\ang{-8.268}$& $\ang{18.128}$& $\ang{30.428}$\\
		\textbf{OGSBI}& $\ang{-8.222}$ & $\ang{17.29}$&
  $\ang{31.99}$\\
		\textbf{CS-SBL}& $\ang{-8}$ & $\ang{17}$&
    $\ang{32}$\\
		\textbf{MUSIC}&
		$\ang{-8}$&$\ang{18}$&$\ang{33}$\\
		\textbf{DFSMC}&$\ang{-8.260}$&$\ang{18.11}$&$\ang{30.66}$\\
	\hline
	\end{tabular}
\end{table}

Extensive simulation results have been conducted. All experiments are conducted in Matlab R2017b on a PC with a 2.9 GHz Intel Core i5 and 8 GB of RAM, and Matlab codes have been made available online at \url{https://drive.google.com/drive/folders/1XwzbNtHXfjTrN4-wylAhGI-CwIY3u3K1}. The mutual coupling effect can be generated by the following expression
\begin{align}
	c_n=\begin{cases}
		(1+\xi_{\text{c}})e^{j\phi_{\text{c}}}10^{\frac{\alpha_{\text{c}}(1+0.5n)}{20}},&n<5 \\
		0,&\text{otherwise} 
	\end{cases},
\end{align}
where $\xi_{\text{c}}\sim \mathcal{U}_{\xi_{\text{c}}}\left(\left[-0.05,0.05\right]\right)$, $\phi_{\text{c}}\sim \mathcal{U}_{\phi_{\text{c}}}\left(\left[0,2\pi\right]\right)$, and we use the parameter $\alpha_{\text{c}}$ in dB to measure the mutual coupling effect between adjacent antennas. Additionally, we use the independent Gaussian distribution to generate the received signals, and for the $m$-th sample in the $n$-th antenna, we have $s_{n,m}\sim\mathcal{CN}\left(\sqrt{2}e^{j\frac{\pi}{2}},1\right)$.  

In this paper, to compare with the state-of-art direction estimation methods, we compare the proposed DFSMC method with the following algorithms:
\begin{itemize}
	\item CS-SBL\footnote{The MATLAB code was downloaded at \url{http://people.ee.duke.edu/~lcarin/BCS.html}}, the Bayesian compressive sensing method proposed in~\cite{shihao2008}.
	\item OGSBI\footnote{The MATLAB code was downloaded at \url{https://sites.google.com/site/zaiyang0248/publication}}, the off-grid sparse Bayesian inference method proposed in~\cite{yang2013}.
	\item MUSIC, the multiple signal classification method~\cite{ralph1986,schmidt1981}.
\end{itemize}

With the simulation parameters in Table~\ref{table1} and the 
        the mutual coupling between adjacent antennas being $\alpha_{\text{c}}=-8$ dB, the spatial spectrum is given in Fig.~\ref{sp8}, where the proposed DFSMC is compared with MUSIC, CS-SBL and OGSBI methods. The estimated directions for $K=3$ signals are given in Table~\ref{table2}. Additionally, the iteration processes of DFSMC, CS-SBL, and OGSBI methods are also given in Fig.~\ref{iter8}. With both the mutual coupling effect and off-grid, the proposed DFSMC method is advantageous in this scenario. 

As shown in Fig.~\ref{iter8}, in the first $300$ iterations, DFSMC method only updates the parameters $\boldsymbol{\mu}_m$, $\boldsymbol{\Sigma}_{\text{X}}$ and $\boldsymbol{\iota}$. Then, during the $301$ to $350$ iterations, the mutual coupling parameters $\boldsymbol{c}$ and $\boldsymbol{\vartheta}$ are updated. For the next $50$ iterations, the off-grid parameter $\boldsymbol{\nu}$ is updated. With repeating the $50$ iterations to update the mutual coupling parameters and the off-grid parameter, the direction error can be decreased. Moreover, as shown in Fig.~\ref{iter8}, when only the mutual coupling parameters are updated, the direction estimation performance can be not improved with the correct estimated directions. However, for the next off-grid estimation, the better performance can be achieved with the updated mutual coupling parameters.   

The estimated spatial spectrum is shown in Fig.~\ref{sp8}. It can be seen that the positions of peak spectrum are closer to the ground-truth directions using the DFSMC method than the OGSBI, CS-SBL and MUSIC methods. The corresponding estimated directions are given in Table~\ref{table2}. When we use the following expression to measure the estimation performance
\begin{align}
    e_1=\sqrt{\frac{1}{K}\|\hat{\boldsymbol{\theta}}-\boldsymbol{\theta}\|^2_2},
\end{align}
where $\hat{\boldsymbol{\theta}}$ denotes the estimated directions. Then, the estimation errors (in deg) of DFSMC, OGSBI, CS-SBL and MUSIC methods can be obtained as $\ang{0.092}$, $\ang{0.346}$, $\ang{0.301}$ and $\ang{0.301}$, respectively. Therefore, since the mutual coupling effect is estimated in the proposed DFSMC method, the direction estimation performance is much better than the existing methods including OGSBI, CS-SBL and MUSIC.

When the mutual coupling effect increases from $\alpha_{\text{c}}=-8$ dB to $\alpha_{\text{c}}=-5$ dB, the corresponding iteration processes and the spatial spectrums of DFSMC, CS-SBL, OGSBI and MUSIC methods are given in Fig.~\ref{iter} and Fig.~\ref{sp}. The estimated directions are given inTable~\ref{table3}, and the estimation errors (in deg) of DFSMC, OGSBI, CS-SBL and MUSIC methods can be obtained as $\ang{0.134}$, $\ang{1.024}$, $\ang{1.128}$ and $\ang{1.495}$. Compared with the direction estimation  performance in the scenario $\alpha_{\text{c}}=-8$ dB, the performance in the scenario $\alpha_{\text{c}}=-5$ dB decreases for all the methods, so the mutual coupling effect has a great effect on the direction estimation performance. However, the proposed DFSMC method can also achieve much better performance than existing methods.

With the $100$ trails, the direction estimation performance with different SNRs is given in Fig.~\ref{SNR}, where we use the following root-mean-square error (RMSE) expression to measure the estimation performance
\begin{align}
    e_2=\sqrt{\frac{1}{KP}\sum_{p=0}^{P-1}\|\hat{\boldsymbol{\theta}}_p-\boldsymbol{\theta}_p\|^2_2},
\end{align}
where $P$ denote the number of trails, $\boldsymbol{\theta}_p$ denotes the directions in the $p$-th trail, and $\hat{\boldsymbol{\theta}}_p$ denotes the estimated directions in the $p$-th trail. As shown in Fig.~\ref{SNR}, the proposed DFSMC method achieves the best estimation performance when the SNR of received signals is greater than $0$ dB. Almost the same estimation performance is achieved by the MUSIC and CS-SBL method. However, with the mutual coupling effect, the direction grids usually cannot be estimated correctly, so the further off-grid optimization in OGSBI cannot improve the estimation performance. Fig.~\ref{SNR} indicates that our proposed DFSMC method is very advantageous in the cases when the SNR of received signals is large.

With different mutual coupling effects between antennas, we show the simulation results in Fig.~\ref{MC}, where the mutual coupling effect $\alpha_{\text{c}}$ between adjacent antennas is from $-16$ dB to $-2$ dB. Since the proposed DFSMC method estimates the mutual coupling vector $\boldsymbol{c}$ iteratively, and DFSMC achieves the best estimation performance among the existing methods including CS-SBL, OGSBI, and MUSIC. It can be seen that with optimizing the off-grid and the mutual coupling vector, the performance of direction estimation can be improved by estimating the sparse signals in the continue domain using the DFSMC method. {\hll The computational complexity of the proposed algorithm mainly depends on step~\ref{5}, step~\ref{10} and step~\ref{8}. The computational complexity of  step~\ref{5}, step~\ref{10} and step~\ref{8} can be obtained as $\mathcal{O}(MU^2N+UN^2+U^3)$, $\mathcal{O}(U^3+MNU^2)$ and $\mathcal{O}(U^3+MU^2N)$. Therefore, the computational complexity of the proposed algorithm can be obtained as $\mathcal{O}(U^3+MU^2N+UN^2)$. Additionally, with $U\geq N$, the computational complexity can be simplified as $\mathcal{O}(MUN^2)$. The computational complexity of the proposed algorithm has the same order of the SBL-based algorithms, such as the OGSBI algorithm and the SBL algorithm. 
}

\section{Conclusions}\label{sec5}
The direction finding problem with the unknown mutual coupling effect has been investigated in this paper. The novel DFSMC method has been proposed to estimate the directions, the means, and variance of received signals, the mutual coupling vector, the noise variance, and the off-grid vector, et al. iteratively. Additionally, the expressions to estimate the unknown parameters have been theoretically derived using the EM method.  Simulation results confirm that the proposed DFSMC method outperforms the existing direction finding methods in the ULA system with the unknown mutual coupling effect.  Future work will focus on the extension of the proposed DFSMC method in the scenario with correlated signals.

\vspace{6pt}

\authorcontributions{conceptualization, Peng Chen and Zhimin Chen; methodology, Peng Chen and Xuan Zhang; software, Zhimin Chen; validation, Linxi Liu; formal analysis, Zhimin Chen; investigation, Peng Chen; resources, Peng Chen; data curation, Liuxi Liu; writing—original draft preparation, Peng Chen; writing—review and editing, Zhimin Chen; visualization, Zhimin Chen; supervision, Xuan Zhang; project administration, Peng Chen; funding acquisition, Peng Chen.}
 
\acknowledgments{This work was supported in part by the  National Natural Science Foundation of China (Grant No. 61801112, 61601281), the Natural Science Foundation of Jiangsu Province (Grant No. BK20180357), the Open Program of State Key Laboratory of Millimeter Waves at Southeast University (Grant No. Z201804).}

\conflictsofinterest{The authors declare no conflict of interest.} 
 
\appendixtitles{yes} 
\appendixsections{one} 
\appendix
\section{Proof of Lemma~\ref{lemma}}\label{derivation}
When the complex vectors $\boldsymbol{u}$ and $\boldsymbol{v}$ are the functions of $\boldsymbol{x}$, we can obtain
\begin{align}
\frac{\partial\boldsymbol{u}^\text{H}\boldsymbol{v}  }{\partial \boldsymbol{x}} & = \begin{bmatrix}
\frac{\partial\boldsymbol{u}^\text{H}\boldsymbol{v}  }{\partial x_0} ,\frac{\partial\boldsymbol{u}^\text{H}\boldsymbol{v}  }{\partial x_1},\ldots, \frac{\partial\boldsymbol{u}^\text{H}\boldsymbol{v}  }{\partial x_{N-1}}
\end{bmatrix}  = \begin{bmatrix}\frac{\partial\sum^{M-1}_{m=0}u^*_mv_m  }{\partial x_0},
\ldots, \frac{\partial\sum^{M-1}_{m=0}u^*_mv_m  }{\partial x_n},\ldots
\end{bmatrix}\notag\\
& = \begin{bmatrix}
\ldots, \sum^{M-1}_{m=0} \frac{\partial u^*_m  }{\partial x_n} v_m+u^*_m\frac{\partial v_m  }{\partial x_n},\ldots
\end{bmatrix}   = \begin{bmatrix}
\ldots, \left(\frac{\partial \boldsymbol{u}^*}{\partial x_n}\right)^\text{T}\boldsymbol{v}+
\boldsymbol{u}^\text{H}\frac{\partial \boldsymbol{v}}{\partial x_n},\ldots
\end{bmatrix}  \notag\\
& = \boldsymbol{v}^\text{T}
\begin{bmatrix}\frac{\partial \boldsymbol{u}^*}{\partial x_0},
\ldots, \frac{\partial \boldsymbol{u}^*}{\partial x_n}, \ldots
\end{bmatrix}
+\boldsymbol{u}^\text{H} \begin{bmatrix}\frac{\partial \boldsymbol{v}}{\partial x_0},
\ldots, 
\frac{\partial \boldsymbol{v}}{\partial x_n},\ldots
\end{bmatrix} \notag\\
& =\boldsymbol{v}^\text{T} \frac{\partial (\boldsymbol{u}^*)}{\partial \boldsymbol{x}} 
+\boldsymbol{u}^\text{H} 
\frac{\partial \boldsymbol{v}}{\partial \boldsymbol{x}}.
\end{align}

With $\boldsymbol{A}$ and $\boldsymbol{u}$ being the function of $\boldsymbol{x}$, we can obtain the  entry in $m$-th row and $n$-th column of $\frac{\partial \boldsymbol{Au}}{\partial \boldsymbol{x}}$ as
\begin{align}
\frac{\partial \left[\boldsymbol{Au}\right]_m}{\partial x_n}& =\frac{\partial
	\sum_{p=0}^{P-1}A_{m,p}u_p }{\partial x_n} =\sum_{p=0}^{P-1}\frac{\partial A_{m,p}}{\partial x_n}u_p+A_{m,p}\frac{\partial u_p}{\partial x_n} \\
& =\boldsymbol{u}^\text{T}
\frac{\partial [\boldsymbol{A}^\text{T}]_m}{\partial x_n}+[\boldsymbol{A}^\text{T}]^\text{T}_m\frac{\partial \boldsymbol{u}}{\partial x_n} = \left[
\frac{\partial \boldsymbol{A}}{\partial x_n}\boldsymbol{u} + \boldsymbol{A}\frac{\partial \boldsymbol{u}}{\partial x_n}\right]_m,\notag
\end{align}
so the $n$-th column of $\frac{\partial \boldsymbol{Au}}{\partial \boldsymbol{x}}$  is
\begin{align}\left[\frac{\partial \boldsymbol{Au}}{\partial \boldsymbol{x}}\right]_n=
\frac{\partial \boldsymbol{A}}{\partial x_n}\boldsymbol{u} + \boldsymbol{A}\frac{\partial \boldsymbol{u}}{\partial x_n}, \end{align} 
and
\begin{align}
\frac{\partial \boldsymbol{Au}}{\partial \boldsymbol{x}}= \begin{bmatrix}
\frac{\partial \boldsymbol{A}}{\partial x_0}\boldsymbol{u} + \boldsymbol{A}\frac{\partial \boldsymbol{u}}{\partial x_0},
\ldots, \frac{\partial \boldsymbol{A}}{\partial x_n}\boldsymbol{u} + \boldsymbol{A}\frac{\partial \boldsymbol{u}}{\partial x_n} ,\ldots
\end{bmatrix}.
\end{align}

\reftitle{References}
 
\externalbibliography{yes}
\bibliography{IEEEabrv,References}   

\end{document}